\title{Counterfactual Explanations for MITL Violations} 
\author{Bernd Finkbeiner}{CISPA Helmholtz Center for Information Security, Saarbrücken, Germany}{finkbeiner@cispa.de}{0000-0002-4280-8441}{}
\author{Felix Jahn}{Saarland University, Saarland Informatics Campus, Saarbrücken, Germany}{felix.jahn@uni-saarland.de}{0000-0003-4851-3385}{}
\author{Julian Siber}{CISPA Helmholtz Center for Information Security, Saarbrücken, Germany}{julian.siber@cispa.de}{0000-0003-0842-0029}{}
\authorrunning{B.\ Finkbeiner, F.\ Jahn, and J.\ Siber} 
\keywords{Timed automata, actual causality, metric interval temporal logic} 
\begin{document}

\renewcommand{\iff}{\text{ iff }}
\newcommand{\brackets}[1]{\left[{#1}\right]}
\newcommand{\angles}[1]{\left\langle{#1}\right\rangle}
\newcommand{\round}[1]{\left({#1}\right)}
\newcommand{\cev}[1]{\reflectbox{\ensuremath{\vec{\reflectbox{\ensuremath{#1}}}}}} 
\newcommand{\abs}[1]{\left|#1\right|}
\newcommand{\semantics}[2][]{\left\llbracket#2\ifthenelse{\isempty{#1}}{\right\rrbracket}{\right\rrbracket_{#1}}}
\newcommand{\powerset}[1]{2^#1}
\newcommand{\trel}{E}

\newcommand{\BB}{\ensuremath{\mathbb{B}}}
\newcommand{\NN}{\ensuremath{\mathbb{N}}}
\newcommand{\ZZ}{\ensuremath{\mathbb{Z}}}
\newcommand{\QQ}{\ensuremath{\mathbb{Q}}}
\newcommand{\RR}{\ensuremath{\mathbb{R}}}
\newcommand{\Lang}{\mathcal{L}}


\newcommand{\deptype}[1]{\ensuremath{\Sigma \, L_{#1}: \List (#1). ~ \BB^{|L_{#1}|} \to \BB}}
\newcommand{\deptypeshort}[1]{\ensuremath{\Sigma \, L_{#1}. ~ \BB^{|L_{#1}|} \to \BB}}
\newcommand{\ListMinus}[2]{#1 \, \backslash  \, #2}
\newcommand{\swap}[1]{\ensuremath{\overset{\leftrightarrow}{#1}}}

\newcommand{\AP}{\mathit{AP}}
\newcommand{\Act}{\mathit{Act}}
\newcommand{\Loc}{\mathit{Q}}
\newcommand{\TA}{\mathcal{A}}
\newcommand{\TAloc}{\TA^{\mathsf{loc}}}
\newcommand{\TAclk}{\TA^{\mathsf{clk}}}
\newcommand{\CFBF}{$\textbf{CF}_{\mathsf{BF}}$}
\newcommand{\CFAct}{$\textbf{CF}_{\mathsf{Act}}$}
\newcommand{\CC}{\mathsf{C}}
\newcommand{\CU}{\mathsf{U}}
\newcommand{\causes}{\mathcal{C}}
\newcommand{\cause}{\mathcal{C}}
\newcommand{\tempcauses}{\mathcal{I}_\causes}
\newcommand{\events}{\mathcal{E}}
\newcommand{\traces}{{\small\mathsf{Traces}}}
\newcommand{\paths}{{\small\mathsf{Paths}}}
\newcommand{\true}{{\small\mathsf{true}}}
\newcommand{\false}{{\small\mathsf{false}}}

\newcommand{\inter}{\cap} 
\newcommand{\TLAlways}{\LTLglobally}
\newcommand{\TLEventually}{\LTLeventually}
\newcommand{\TLUntil}{\LTLuntil}

\newcommand{\Uppaal}{\textsc{Uppaal}}
\newcommand{\Pyuppaal}{\textsc{PyUppaal}}

\newcommand{\MITL}{MITL} 

\newcommand{\KwInSpace}[1]{\KwIn{\textbf{Input:} #1}}
\newcommand{\KwReturn}{\textbf{return}}
\newcommand{\True}{\textbf{true}}
\newcommand{\Kwfreeline}{\phantom{text}\\}
\newcommand{\False}{\textbf{false}}

\maketitle

\begin{abstract}
MITL is a temporal logic that facilitates the verification of real-time systems by expressing the critical timing constraints placed on these systems. MITL specifications can be checked against system models expressed as networks of timed automata. A violation of an MITL specification is then witnessed by a \emph{timed trace} of the network, i.e., an execution consisting of both discrete actions and real-valued delays between these actions. Finding and fixing the root cause of such a violation requires significant manual effort since both discrete actions and real-time delays have to be considered. In this paper, we present an automatic explanation method that eases this process by computing the root causes for the violation of an MITL specification on the execution of a network of timed automata. This method is based on newly developed definitions of counterfactual causality tailored to networks of timed automata in the style of Halpern and Pearl's actual causality. We present and evaluate a prototype implementation that demonstrates the efficacy of our method on several benchmarks from the literature. 
\end{abstract}

\section{Introduction}

Networks of timed automata are a popular formalism to model a wide range of real-time systems such as automotive controllers~\cite{GerkeEFP10,LindahlPY98} and communication protocols~\cite{DavidY00,Havelund97}. These models can be automatically checked against specifications in \emph{Metric Interval Temporal Logic}~(MITL)~\cite{alur1996}, a real-time extension of linear-time temporal logic that allows to constrain temporal operators to non-singleton intervals over the real numbers. In case a network of timed automata does not satisfy an MITL specification, a model-checking procedure will return an execution of the network as a counterexample. Such an execution is defined by discrete actions of the automata in the network and by real-valued delays that describe the time that passes between the discrete actions. Hence, fixing an erroneous system requires insight into both actions and delays that caused the violation on the given counterexample. 

In this paper, we present an approach that facilitates this insight through counterfactual explanations for the observed violation. Previous approaches for explaining real-time violations only consider safety properties~\cite{MariDG21} or only real-time delays without discrete actions~\cite{kolbl2020}, and hence cannot provide a comprehensive insight for violations of unconstrained MITL properties. Like related efforts for discrete systems~\cite{beer2012,coenen2022}, we ground our explanation method in the theory of \emph{actual causality} as formalized by Halpern and Pearl~\cite{halpern2016,HalpernPearl05a,HalpernPearl05b} and identify the actions and delays that are actual causes for the violation of the specification on the observed counterexample. This approach faces several new challenges when confronted with real-time models expressed as networks of timed automata, instead of the previously considered structural equation models~\cite{HalpernPearl05a}, finite-state machines~\cite{coenen2022}, and traces~\cite{beer2012}. 

The first challenge pertains to the concept of \emph{interventions}, which describe how the observed counterexample is modified when hypothetical counterfactual executions are considered during the analysis. While previous results usually consider models where the set of counterfactual scenarios is finite, modifying delays in executions of timed automata gives rise to infinitely many counterfactual scenarios. Our main insight to solve this problem is based on constructing networks of timed automata that model all such counterfactual executions, such that checking a causal hypothesis or even synthesizing a cause from scratch can be realized through model checking of these newly constructed automata. Actual causality in models with infinitely many variables, each potentially having an infinite domain, is only starting to be understood~\cite{halpern2022} and our results suggest that known techniques from timed automata verification are partially transferable to this general theory, e.g., for cause computation.

A second challenge we face in networks of timed automata pertains to the concept of \emph{contingencies}. When two or more potential causes preempt each other, contingencies allow to isolate the true, non-preempted cause from the others. In structural equations models~\cite{halpern2015} and Coenen et al.'s definition for finite-state machines~\cite{coenen2022}, this is realized by extending the system dynamics with resets that set variables back to the value they had in the actual, original scenario. Networks of timed automata have both local variables, i.e., component locations, and global variables such as clocks. We account for this by defining two automata constructions that allow such resets through contingencies on the local level by single components, as well as on the network level for global clock variables.

\begin{figure}
    \centering
    \begin{subfigure}{.49\textwidth} 
    \centering
		\begin{tikzpicture}[draw, semithick, node distance = 8em, every text node part/.style={align=center}, state/.style={draw,circle,minimum size = 4em}]
			\node[state, initial above, initial text = {}] (l0) {$\mathsf{init}$}; 
			\node[state, right of=l0](l1) {$\mathsf{crit}$\\$x \leq 3$}; 
			\path[draw,->]   (l0) edge[bend right = 10] node[below] {$\beta$\\$x := 0$} (l1)
                            (l1) edge[bend right = 10] node[above] {$\beta$\\$x = 3$} (l0)
                            (l0) edge [loop left,looseness=5]  node[left] {$\alpha$\\$x := 0$} (l0);
                            
		\end{tikzpicture}
        \caption{A component automaton of network $\TA_1 \, || \, \TA_2$.}
    \label{fig:motivationa}
    \end{subfigure}
    \begin{subfigure}{.49\textwidth} 
    \centering	
    \begin{tikzpicture}[draw, semithick, node distance = 3em]
    \node[](o1){$\{\mathsf{init_{1,2}}\}$};
    \node[right =0.75 of o1](o2){$\{\mathsf{crit_1}, \mathsf{init_2}\}$};
    \node[right =0.75 of o2](o3){$\{\mathsf{crit_{1,2}}\}$};
\node[below right= 0.5 and -0.5 of o1](u1){$\{\mathsf{init_1}, \mathsf{crit_2}\}$};
\node[right =0.75 of u1](u2){$\bigl(\{\mathsf{init_{1,2}}\}$};
\node[right =0.75 of u2](u3){$\bigl)^\omega$};
\draw[->](o1) edge node[above]{\small$\beta_1$} node[below]{\small$1.0$} (o2)
(o2) edge node[above]{\small$\beta_2$} node[below]{\small$1.0$} (o3)
(u1) edge node[above]{\small$\beta_2$} node[below]{\small$1.0$} (u2)
(u2) edge node[above]{\small$\alpha_1$} node[below]{\small$2.0$} (u3);
\draw[->](o3.south) -- ++ (0,-0.25) -- ++ (-5.5,0) |- node[pos=0.8,above]{\small$\beta_1$} node[pos=0.8,below]{\small$2.0$} (u1.west);
		\end{tikzpicture}
  \vspace{0.5em}
        \caption{An execution of the network $\TA_1 \, || \, \TA_2$.}
    \label{fig:motivationb}
    \end{subfigure}
    \caption{The network and its execution discussed as an illustrative example in Subsection~\ref{subsec:example}.}
    \label{fig:motivation}
\end{figure}

\subsection{Illustrative Example}\label{subsec:example}

We discuss our approach for causal analysis with the example of a small network of timed automata $\TA_1 \, || \, \TA_2$ consisting of two identical component automata as depicted in Figure~\ref{fig:motivationa}, which we will also use as a running example throughout the paper. The two automata can each switch between the two locations $\mathsf{init}$ and $\mathsf{crit}$, but whenever they enter the location $\mathsf{crit}$ with action $\beta$, they are required to stay there for exactly three time units. This is realized through an initial reset of a global clock variable ($x := 0$) with the first $\beta$ action and a location invariant ($x \leq 3$) in location $\mathsf{crit}$, as well as a clock guard ($x = 3$) on the second $\beta$ action. We want to check the mutual exclusion property $\TLAlways_{[0, \infty)} (\lnot \mathsf{crit_1} \lor \lnot \mathsf{crit_2})$ expressed in MITL, which states that the two automata $\TA_1$ and $\TA_2$ are never both in location $\mathsf{crit}$. It is easy to see that this property is violated, e.g., by the execution depicted in Figure~\ref{fig:motivationb}. This (simplified) execution is an infinite sequence of location labels constructed from delays and discrete actions, where the $\omega$-part is repeated infinitely often. For abbreviation, we place the delay values and actions on the same arrow, which means that the action above the arrow is performed after delaying for as long as specified under the arrow. Both action and location labels refer to a component automaton performing the action and being in a location, respectively, through their index. The execution depicted in Figure~\ref{fig:motivationb} respects the dynamics of the automata, e.g., exactly three time units pass between entering and leaving $\mathsf{crit}$. As we can see, Automaton 2 uses a $\beta$ action less than three time units after Automaton 1, while the latter needs to stay in $\mathsf{crit}$ for exactly three time units.

\begin{table}
    \caption{A contrastive overview of the four root causes on the execution in the illustrative example, inferred using but-for causality and actual causality.}
    \label{tab:intro_table}
\centering
		\def\arraystretch{1.15}
		\setlength\tabcolsep{1mm}
		\begin{booktabs}{X[c]ccc}
			\toprule
			\textbf{Ref.} & \textbf{But-For Causes} & \textbf{Actual Causes} & \textbf{Intuitive Description}\\
			\midrule
                \textbf{1} & $\{(1.0,1,\TA_1) \}$ & $\{(1.0,1,\TA_1) \}$ & The first component did not wait. \\
                \midrule[dotted]
                \textbf{2} & $\{(2.0,1,\TA_2) \}$ & $\{(2.0,1,\TA_2) \}$ & The second component did not wait. \\
                \midrule[dotted]
                \textbf{3} & \makecell{\textbf{a:} $\{(\beta, 1, \TA_1), (\beta, 2, \TA_1)\}$ \\ \textbf{b:} $\{(\beta, 1, \TA_1), (3.0, 2, \TA_1)\}$} & $\{(\beta,1,\TA_1) \}$ & The first component entered $\mathsf{crit}$. \\
                \midrule[dotted]
                \textbf{4} & $\{(\beta,1,\TA_2)\}$ & $\{(\beta,1,\TA_2) \}$ & The second component entered $\mathsf{crit}$. \\
			\bottomrule
		\end{booktabs}
\end{table}

We generate explanations through counterfactual reasoning: For instance, we can infer that one cause of the violation above is that the second component waits only two time units before entering $\mathsf{crit}$ by considering hypothetical executions with alternative delays at this particular point, all else being the same. This relaxed model allows an execution where the second component waits with entering the $\mathsf{crit}$ location until after the first component has already left theirs, such that no violation occurs. Hence, we can infer the \emph{but-for} cause $\{(2.0,1,\TA_2) \}$ which says that the first delay of $2.0$ time units by component $\TA_2$ is a root cause for the violation. We measure delays locally on the component level and hence need to add all global delays between actions of the second component as defined in the execution above. Table~\ref{tab:intro_table} lists this cause (Cause~2) along with the other root causes inferred through \emph{but-for} causal analysis in the second column. Cause~1 expresses that the delay of component $\TA_1$ can similarly be set high enough that no violation occurs. 

Cause~3 shows that the \emph{but-for} counterfactual analysis is not always enough: With this na\"ive criterion we cannot infer that the first $\beta$ action of $\TA_1$ is a cause for the violation of the property on this execution, since changing it alone to, e.g., $\alpha$, does not suffice to avoid the violation. The second $\beta$ then steps in to produce the same effect, which means we are dealing with a \emph{preemption} of potential causes. In the but-for causal analysis, we consequently have to additionally intervene on the preempted causes to obtain executions to avoid the effect. In this case, we can either additionally change the second $\beta$ (Cause~3a), or the second delay (Cause~3b -- this way we can set the entering of $\mathsf{crit}$ to after component $\TA_2$ has already left). These larger causes are not desirable, because they do not only point to the root of the issue.  As a solution in such cases of preemption, Halpern and Pearl~\cite{HalpernPearl05a} suggest \emph{contingencies}, and Coenen et al.~\cite{coenen2022} have recently lifted this to finite-state machines with infinite executions. Inspired by these efforts, we propose a contingency mechanism for networks of timed automata that similarly allows us to infer the first $\beta$ as the true cause in the given scenario (Cause~3). This mechanism extends the network with contingency edges that, e.g., allow the second $\beta$ to move to the same location as in the original execution, i.e., to $\mathsf{init}$. This then produces a witnessing counterfactual run that avoids the effect.

\subsection{Outline and Contributions}

After recalling preliminaries in Section~\ref{sec:prelim}, we develop our definitions of counterfactual causality in networks of timed automata (Section~\ref{sec:causalitydef}). We follow Halpern's approach~\cite{halpern2016} in first defining a notion of \emph{minimal but-for} causality. Counterfactual reasoning is realized through an automaton construction that allows to search for a witnessing intervention in the infinite set of counterfactual runs through model checking. Inspired by Coenen et al.~\cite{coenen2022,coenen2022-1}, we extend but-for causality through a construction of contingency automata, which model contingencies on the local level of components as well as on the network level for global variables such as clocks (Subsection~\ref{subsec:contingencies}), yielding a main building block for our definition of \emph{actual} causality. In Section~\ref{sec:algorithms}, we present algorithms for computing and checking but-for and actual causes. These algorithms exploit a property of both notions of causality that we term cause \emph{monotonicity}, which allows us to reduce the potential causes we need to consider during computation. We have implemented a prototype of this algorithm and report on its experimental evaluation in Section~\ref{sec:experiments}. We show that causes can be computed in reasonable time and help in narrowing down the behavior responsible for an MITL violation. To summarize, we make the following contributions:

\begin{itemize}
    \item We define and study the notions of but-for causality and actual causality in networks of timed automata, for effects described by arbitrary MITL properties;
    \item We propose an algorithm for computing these causes and study its theoretical complexity;
    \item We report the results of a prototype implementation of this algorithm for automated explanations of counterexamples in real-time model checking.
\end{itemize}

\section{Preliminaries}\label{sec:prelim}

We recall background on actual causality, timed automata as models of real-time systems, and MITL as a temporal logic for specifying real-time properties.

\subsection{Actual Causality}\label{subsec:hpcausality}
We recall Halpern's modified version~\cite{halpern2015} of actual causality~\cite{HalpernPearl05a}, which uses \emph{structural equation models} to define the causal dependencies of a system. Formally, a \emph{causal model} is a tuple $M = (\mathcal{S}, \mathcal{F})$ that consists of a \emph{signature} $S = (\mathcal{U}, \mathcal{V}, \mathcal{R})$ and \emph{structural equations} $\mathcal{F} = \{F_X \, | \, X \in \mathcal{V}\}$. The sets $\mathcal{U}$ and $\mathcal{V}$ define \emph{exogenous variables} and \emph{endogenous variables}, respectively. The \emph{range} $\mathcal{R}(Y)$ specifies the possible values of each variable $Y \in \mathcal{Y} = \mathcal{U} \cup \mathcal{V}$. A structural equation $F_X \in \mathcal{F}$ defines the value of an endogenous variable $X \in \mathcal{V}$ as a function $F_X: (\times_{Y \in \mathcal{Y} \setminus \{X\}} \, \mathcal{R}(Y)) \to \mathcal{R}(X)$ of the values of all other variables in $\mathcal{U} \cup \mathcal{V}$, without creating cyclic dependencies in $\mathcal{F}$. Therefore, the structural equations have a unique solution for a given \emph{context} $\vec{u} \in (\times_{U \in \mathcal{U}} \, \mathcal{R}(U)) $, i.e., a valuation for the variables in $\mathcal{U}$. Actual causality then defines whether a value assignment $\vec{X} = \vec{x}$ causes $\varphi$, a conjunction of \emph{primitive events} $Y = y$ for $Y \in \mathcal{V}$, in a given context.

\begin{definition}[Halpern's Version of Actual Causality~\cite{halpern2015}]\label{def:actualcauseHP}
	$\vec{X} = \vec{x}$ is an actual cause of $\varphi$ in $(M, \vec{u})$, if the following three conditions hold: 
	\begin{description}
		\item[AC1.] $(M, \vec{u}) \models \vec{X} = \vec{x}$ and $(M, \vec{u}) \models \varphi$.
		\item[AC2.] There is a \emph{contingency} $\vec{W} \subseteq \mathcal{V}$ with  $(M, \vec{u}) \models \vec{W} = \vec{w}$ and a setting $\vec{x}'$ for the variables in $\vec{X}$ s.t. $(M, \vec{u}) \models [\vec{X} \leftarrow \vec{x}', \vec{W} \leftarrow \vec{w}]\neg\varphi$.
		\item[AC3.] $\vec{X}$ is minimal, i.e., no strict subset of $\vec{X}$ satisfies {\textbf{AC1}} and {\textbf{AC2}}.
	\end{description}
\end{definition}

AC1 simply states that both the cause and the effect have to be satisfied in the given context $\vec{u}$ and causal model $\mathcal{M}$. AC2 appeals to an \emph{intervention} $\vec{X} \leftarrow \vec{x}'$ that overrides the structural equations for all $\vec{X}_i \in \vec{X}$ such that $F_{\vec{X}_i} = \vec{x}'_i$. While the witness $\vec{x}'$ can be chosen arbitrarily, the valuation $\vec{w}$ for the \emph{contingency} variables $\vec{W}$ has to be the same as in the original context. 
The contingency is applied after the intervention, and in this way allows to reset certain variables to their original values, with the aim to infer more precise causes in certain scenarios. Hence, AC2 requires that some intervention together with a contingency avoids the effect, i.e., the resulting solution to the modified structural equations falsifies at least one primitive event in $\varphi$. AC3 ensures that $\vec{X} = \vec{x}$ is a concise description of causal behavior by enforcing minimality. In particular, this ensures that for no variable the valuation in $\vec{x}'$ (AC2) coincides with its original valuation in $\vec{x}$. 

\begin{example}
    We recall a classic example of Suzy and Billy throwing rocks at a bottle~\cite{HalpernPearl05a}. We have the endogenous variables $\mathit{BT},\mathit{ST}$ for Billy and Suzy throwing their rock, respectively. $\mathit{BH},\mathit{SH}$ signify that they hit, and $\mathit{BB}$ encodes that the bottle breaks from a hit. $\mathit{BT}$ and $\mathit{ST}$ directly depend on some nondeterministic exogenous variables, while the other structural equations are $\mathit{BH} = \mathit{BT} \land \lnot \mathit{ST}$, $\mathit{SH} = \mathit{ST}$ and $\mathit{BB} = \mathit{BH} \lor \mathit{SH}$, i.e., Suzy's throw is always faster than Billy's. Hence, in the context where both throw their rock, we have $\mathit{BT} = \mathit{ST} = \mathit{SH} = \mathit{BB} = 1$ and $\mathit{BH} = 0$. The \emph{intervention} $\mathit{ST} = 0$ does not suffice to avoid the effect, because the structural equations still evaluate to $\mathit{BB} = 1$ due to Billy's throw. We say Billy's throw was \emph{preempted}. We can pick the contingency $\mathit{BH} = 0$ from the original evaluation as a \emph{contingency}. This means we set both $\mathit{ST} = 0$ and $\mathit{BH} = 0$, the latter of which ``blocks'' the influence of Billy's throw, and obtain an evaluation where the effect disappears, i.e., with $\mathit{BB} = 0$. Finally, only the event $\mathit{ST} = 1$ is in the cause.
\end{example}

\subsection{Networks of Timed Automata}
We use networks of timed automata~\cite{alur1999} to model real-time systems. We fix a finite set $\AP$ of \emph{atomic propositions} and a finite set of \emph{actions} $\Act$. Given a set of real-valued \emph{clocks} $X$, a \emph{clock constraint} is a conjunctive formula of atomic constraints of the form $x \sim n$ or $x - y\sim n$ with $x,y \in X$, $\sim \, \in \! \{<, \leq, =, \geq, >\}$, and $n \in \NN$. The set of clock constraints over a clock set $X$ is denoted $\CC(X)$. Then, a \emph{timed automaton} is a tuple $\TA = (\Loc, q_0, X, \trel, I,L)$, where $\Loc$ is a finite set of \emph{locations}, $q_0 \in \Loc$ is the \emph{initial location}, $X$ is a finite set of \emph{clocks}, $\trel \, \subseteq (\Loc \times \CC(X) \times \Act \times \CU(X) \times \Loc)$ is the \emph{edge relation}, $I: \Loc \to \CC(X)$ is an \emph{invariant assignment}, and  $L: \Loc\to 2^\AP$ is a \emph{labeling function}. We consider a version of \emph{updatable} timed automata that can reset clocks to constants~\cite{DBLP:journals/tcs/BouyerDFP04}. Hence, the set of \emph{clock updates} $\CU(X)$ is the set of partial functions mapping clocks to natural numbers: $\CU(X) = \{ U : X \rightharpoonup \mathbb{N} \}$. A \emph{clock assignment} for a set of clocks $X$ is a function $u: X \to \RR_{\geq 0}$. $u_0$ denotes the assignment where all clocks are mapped to zero. 
We write $u \models g$ if $u$ satisfies a clock constraint $g \in \CC(C)$, $u + \delta$ for the clock assignment that results from $u$ after $\delta \in \RR_{\geq 0}$ time units have passed, i.e., $(u + \delta)(x) = u(x) + \delta$, and $u \leftarrow U$ for the assignment that updates $u$ in accordance with $U$, i.e., $(u \leftarrow U)(x) = U(x)$ if $x \in \mathit{dom}(U)$ else $(u \leftarrow U)(x) = u(x)$.

\begin{definition}[Semantics of Timed Automata]
    The semantics of a timed automaton $\TA = (\Loc, q_0, X, \trel, I,L)$ is defined by a transition system $\smash{(\Loc \times \mathbb{R}^{\abs{X}}_{\geq 0}),(q_0,u_0),\rightarrow)}$, where $\rightarrow$ contains:
\begin{description}
    \item[\emph{delays:}] $( q, u ) \xrightarrow{\delta} ( q, u + \delta )$ 
		iff $\delta \in \RR_{\geq 0}$ and $(u+\delta') \models I(q)$ for all $0 \leq \delta' \leq \delta$, and
    \item[\emph{actions:}] $( q, u ) \xrightarrow{\alpha} ( q', u \leftarrow U)$
		iff $(q,g,\alpha,U,q') \in \trel$, $ u \models g$, and $(u \leftarrow U) \models I(q')$.
\end{description}
\end{definition}

A \emph{run} $\rho = (q_0,u_0) \xrightarrow{\delta_1}\xrightarrow{\alpha_1} (q_1,u_1) \xrightarrow{\delta_2}\xrightarrow{\alpha_2} \ldots$ of $\TA$ is a sequence of alternating delay and action transitions. The set $\Pi(\TA)$ is the set of all runs of $\TA$. The \emph{trace}  $\pi(\rho) = \langle \delta^\rho_1, \alpha^\rho_1 \rangle \langle \delta^\rho_2, \alpha^\rho_2 \rangle \ldots $ of a run $\rho$ is the sequence of delay and action transitions.  We sometimes denote the elements of some run $\rho$  or trace $\pi$ at index i with $q^\rho_i$, $\delta^\rho_i$ etc. We define the accumulated delay as $\delta(i,j) =  \sum_{k=i,\ldots, j} \delta_k$ and $\delta_0 = 0$. The \emph{signal} $\sigma^\rho$ of the run $\rho$ maps time points to location labels: $\sigma^\rho(t) = \{ a \; | \; \exists i. \; a \in L(q_i) \land \delta(0,i) \leq t < \delta(0,i+1) \}$. The language $\Lang(\TA)$ is the set of all signals with a corresponding run of $\TA$. We use this \emph{left-closed right-open} interpretation of signals due to Maler et al.~\cite{MalerNP06} because of its simplicity. It is straightforward to extend our counterfactual analysis technique to other semantics, e.g., continuous time and point wise~\cite{alur1996}, or even to other logics with linear-time semantics, as long as their model checking problem is decidable. Note that we make use of an intersection operation $\cap$ for timed automata which intersects the \emph{actions}, i.e., the edge label of the automata. You may assume that the operation unifies the labels of the locations, but we apply it such that only one operand automaton has location labels. This means that the result of $\TA_1 \cap \TA_2$ is not (singal-based) language intersection in the classical sense, i.e., we do not have $\Lang(\TA_1 \cap \TA_2) = \Lang(\TA_1) \cap \Lang(\TA_2)$.

\begin{definition}[Network of Timed Automata]
A network of timed automata $\TA_1 \; || \ldots || \; \TA_n $ is constructed through parallel composition. Let $\TA_i = (\Loc^i, l_0^i, X, \trel^i, I^i,L^i)$ for all $1 \leq i \leq n$ with a common set of global clocks $X$. The network $\TA_1 \; || \ldots || \; \TA_n $ is defined by the automaton $\TA = (\Loc, q_0, X, \trel, I,L)$, where the locations are the Cartesian product $\Loc = \Loc^1\times\ldots\times\Loc^n$, with the initial state $q_0^n = (q^1_0,\ldots,q^n_0)$, the invariants are combined as $I(\vec{q}\,) = \bigwedge_{1\leq i \leq n} I^i(q_i)$, and the labels are unified as $L(q) = \bigcup_{1\leq i \leq n} L^i(q_i)$. The edge relation $E$ contains two types:

\begin{description}
    \item[\emph{internal:}] $\big(\vec{q},g,\langle\TA_i,\TA_i,\alpha\rangle,U,\vec{q}\,[q'_i/q_i]\big)$ 
		iff $(q_i,g,\alpha,U,q'_i) \in E^i$, and
    \item[\emph{synchronized:}] $\big(\vec{q},g_i \land g_j,\langle\TA_i,\TA_j,\alpha\rangle,U_i\cup U_j,\vec{q}\,[q'_i/q_i,q'_j/q_j]\big)$
		iff $i \neq j$, $(q_i,g_i,\alpha,U,q'_i) \in E^i$, and $(q_j,g_j,\bar{\alpha},U,q'_j) \in E^j$.
\end{description}
\end{definition}

\noindent Hence, we do explicitly identify the component automata participating in action transitions by constructing tuples containing actions and automata handles. This is for technical convenience in later constructions, and we define a predicate to check whether an automaton participates in an action transition as $\mathit{participates}(\TA_i,\langle\TA_j,\TA_k,\alpha\rangle) := (i = j) \lor  (i = k)$, as well as a partial function for accessing the original action as $\mathit{action}(\TA_i,\langle\TA_j,\TA_k,\alpha\rangle) = \alpha$ iff $i = j$ and $\mathit{action}(\TA_i,\langle\TA_j,\TA_k,\alpha\rangle) = \bar{\alpha}$ iff $i = k$. 

\subsection{Metric Interval Temporal Logic}

We use \textit{Metric Interval Temporal Logic} (MITL)~\cite{alur1996} for defining real-time properties such as system specifications and effects. The syntax of MITL formulas over a set of atomic propositions $\AP$ is defined by $\phi := p \mid \neg \phi \mid \phi \land \phi \mid \phi \, \TLUntil_I \phi$, where $p \in \AP$ and $I$ is a non-singleton interval of the form $[a,b]$, $(a,b]$, $[a,b)$, $(a,b)$, $(a, \infty)$, or $[a, \infty)$ with $a, b \in \NN$ and $a < b$. We also consider the usual derived Boolean and  temporal operators ($\TLEventually_I \, \phi := \top \, \TLUntil_I \, \phi$, $\TLAlways_I \, \phi := \neg \TLEventually_I \, \neg \phi$,  $\phi \, \TLUntil \psi := \phi \, \TLUntil_{[0, \infty)} \psi, \,$ $\, \TLEventually \, \phi := \TLEventually_{[0, \infty)} \, \phi,\,$ and $\, \TLAlways \, \phi := \TLAlways_{[0, \infty)} \, \phi$). 

\noindent The semantics of MITL is defined inductively with respect to a signal $\sigma: \RR_{\geq 0} \to 2^\AP$ and a timepoint $t \in \RR_{\geq 0}$.
\begin{align*}
	&\sigma, t \models p &\text{iff} &&&p \in \sigma(t) \\
	&\sigma, t \models \neg \phi &\text{iff} &&&\sigma, t \not\models \phi \\
	&\sigma, t \models \phi \land \psi  &\text{iff} &&&\sigma, t\models \phi \text{ and } \sigma, t\models \psi\\
	&\sigma, t \models \phi \, \TLUntil_I \, \psi  &\text{iff} &&& \exists t' > t. ~ t' - t \in I, \; \sigma,t' \models \psi \text{ and } \forall t'' \in (t, t'). \, \sigma, t'' \models \phi
\end{align*}
A run $\rho$ \emph{satisfies} an MITL formula $\phi$, iff $\sigma(\rho), 0 \models \phi$. A timed automaton $\TA$ \textit{satisfies} $\phi$, iff all of its runs satisfy $\phi$. We write $\rho \models \phi$ and  $\TA \models \phi$, respectively.

\section{Counterfactual Causality in Real-Time Systems}\label{sec:causalitydef}

In this section, we develop two notions of counterfactual causality in real-time systems. We first define our language for describing causes and how they define interventions on timed traces (Subsection~\ref{subsec:eventsets}). We then start with a simple notion of \emph{but-for} causality in networks of timed automata based on interventions without contingencies (Subsection~\ref{subsec:butforcausality}). Afterward, we outline how to model contingencies in a timed automaton (Subsection~\ref{subsec:contingencies}) and use them to define \emph{actual} causes, in the sense of Halpern and Pearl (cf.~Subsection~\ref{subsec:hpcausality}). 
Note that the proofs of all nontrivial statements of this section are in Appendix~\ref{sec:proofs3}.

\subsection{Interventions on Timed Traces}\label{subsec:eventsets}

We describe actual causes as finite sets of \emph{events}. Events have two distinct types such that they either refer to an action or a delay transition in a given run.

\begin{definition}[Event]\label{def:event} A delay event is a tuple $(\delta, i) \in \RR_{\geq 0} \times \NN$ and an action event is a tuple $(\alpha, i) \in Act \times \NN$. The sets of all delay and action events are denoted as $\mathcal{DE}$ and $\mathcal{AE}$, respectively. The set of all events is $\events = \mathcal{DE} \, \dot\cup \, \mathcal{AE}$. For a trace $\pi$, the set of events on $\pi$ is defined as $\events_\pi = \{(\alpha^\pi_i,i) \; | \; i \in \mathbb{N}_{>0}\}\cup\{(\delta^\pi_i,i) \; | \; i \in \mathbb{N}_{>0}\}$. 
\end{definition} 

When we describe a cause as a set of events, we are mainly interested in the counterfactual runs obtained by modifying the events contained in the cause. In the style of Halpern and Pearl, we call such modifications \emph{interventions}. If the actual run is given in a finite, lasso-shaped form and the cause is a finite set of events, these interventions can be described by a timed automaton that follows the dynamics of the actual trace, except for events that appear in the cause. For these events, the behavior is relaxed to allow arbitrary alternative actions or delays. We call a run $\rho$ \emph{lasso-shaped} if it can be composed of a (possibly empty) prefix and an infinitely occurring loop, i.e., if it is of the form $$\smash{\rho = (q_0,u_0) \ldots   \big( (q_n,u_n) \ldots (q_{p-1},u_{p-1})  \xrightarrow{\delta_p}\xrightarrow{\alpha_p} \big)^\omega} \enspace , $$
where the $\omega$-part is repeated infinitely often. Note that strictly speaking, a lasso-shaped trace as defined here does not exists for all models that violate an MITL property, because clock valuations are not guaranteed to stabilize in some infinitely-repeating loop $u_n \ldots u_{p-1}$. We use the valuations to reset clock values in our contingency construction that will be introduced in Subsection~\ref{subsec:contingencies}. This construction may be generalized by considering clock \emph{regions} or \emph{zones}~\cite{Bengtsson2003} instead of the valuations. This requires defining the resets in the contingency automaton accordingly. 

In this paper, we simplify by assuming the existence of a lasso-shaped run as defined above. In general, we can further assume the clocks to be assigned to values in $\mathbb{Q}$, as timed automata do not distinguish between the real and rational numbers~\cite{AlurD94}. For a lasso-shaped run $\rho$ as described above, we define a function to access the successor index of an action as $\mathit{dst}_\rho : \{1,\ldots,p\} \mapsto \{0,\ldots,p-1\}$ with $\mathit{dst}_\rho(k) = k$ if $k \neq p$ and $\mathit{dst}_\rho(p) = n$ else. We define the length of the run $\rho$ as $|\rho| = p$. The functions $\mathit{dst}_\pi$ and $|\pi|$ are defined analogously for the trace of a lasso-shaped run. We are now ready to define the automaton modelling traces with interventions. 

\begin{figure}[t]
    \centering 
		\begin{mathpar}
			\inferrule*
			{(\delta^\pi_i, i) \notin \cause\\
			(\alpha^\pi_i, i) \notin \cause}
			{\big(i-1, d = \delta^\pi_i, \alpha^\pi_i,d := 0, \mathit{dst}_\pi(i)\big) \in E}
			\and
			\inferrule*
			{(\delta^\pi_i, i) \in \cause\\
			(\alpha^\pi_i, i) \notin \cause}
			{\big(i-1,\top,\alpha^\pi_i,d := 0,\mathit{dst}_\pi(i)\big) \in E}
			\and
			\inferrule*
			{(\delta^\pi_i, i) \notin \cause \!\! \\
			(\alpha^\pi_i, i) \in \cause \!\! \\
			\beta \in Act}
			{\big(i-1,d = \delta^\pi_i,\beta,d := 0,\mathit{dst}_\pi(i)\big) \in E}
			\and
			\inferrule*
			{(\delta^\pi_i, i) \in \cause \!\! \\
			(\alpha^\pi_i, i) \in \cause \!\! \\
			\beta \in Act}
			{\big(i-1,\top,\beta,d := 0,\mathit{dst}_\pi(i)\big) \in E}
		\end{mathpar}
    \caption{Rules defining the edge relation $E$ of the counterfactual trace automaton $\TA^\cause_\pi$.}
    \label{fig:edge_rules}
\end{figure}

\begin{definition}[Counterfactual Trace Automaton]\label{definiton:CFTA_Delay}
	Let $\pi$ be a lasso-shaped trace over the set of actions $\mathit{Act}$ and let $\cause \subseteq \events_\pi$ be a finite set of events. The counterfactual trace automaton of trace $\pi$ for the set of events $\cause$ is defined as  $\TA^\cause_\pi := (\Loc, q_0, X, \trel, I,L)$ with $\Loc := \{0, \dots, |\pi|-1\}$, $q_0 := 0$, $X := \{d\}$. The transition relation $\trel$ is defined by the following rules depicted in Figure~\ref{fig:edge_rules}, we have $L(q) = \emptyset$ for all $q \in Q$, and $$I (q) := \begin{cases} d \leq \delta^\pi_{q + 1}, & \text{if }(\delta^\pi_{q+1}, q+1) \not\in \cause \\
		\top, & \text{otherwise.}\end{cases} $$
\end{definition}

The main idea of the counterfactual automaton $\TA^\cause_\pi$ is to follow the actions and delays of the original run for all events that are not in the event set $\cause$, and allow arbitrary action and delays for events in $\cause$. Hence, $\TA^\cause_\pi$ modifies the \emph{trace} of $\rho$, i.e., the sequence of action and delay events. Subsequently, we will combine a local $\TA^\cause_\pi$ with the dynamics of the original components to obtain full counterfactual runs of a network of timed automata. The interventions on actions and delays are captured by the rules that define the transition relation and are listed in Figure~\ref{fig:edge_rules}, which treat the different combination of events that may or may not be in the cause at a specific index $i$. Crucially, the automaton $\TA^\cause_\pi$ then captures not just a single concrete intervention on the events in $\cause$ with respect to the run $\rho$, such as a modified trace with a specific alternative delay deviating from the actual trace,
\begin{figure}[b]
    \centering
		\begin{tikzpicture}[draw, semithick, node distance = 13em, every text node part/.style={align=center}, state/.style={draw,circle,minimum size = 3em}]
			\node[state, initial, initial text = {}] (l0) {$d \leq 1$}; 
			\node[state, right of=l0](l1) {}; 
   			\node[state, right of=l1](l2) {$d \leq 2$}; 
      
			\path[draw,->]   (l0) edge node[above] {$\{\alpha, \beta\}$, $d = 1$} (l1)
                            (l1) edge node[above] {$\beta$, $\top$} (l2)
                            (l2) edge [loop right]  node[midway, right] {$\alpha$, $d = 2$\\$d := 0$} (l2);

                \path[->]   (l0) edge node[below] {$d := 0$} (l1)
                (l1) edge node[below] {$d := 0$} (l2);
		\end{tikzpicture}
    \caption{Counterfactual trace automaton $\TA^\cause_\pi$.}
    \label{fig:CTA}
\end{figure} 
 but \emph{all} (possibly infinitely many) interventions on the events, i.e., it contains all traces with possibly varying actions and delays at specific indices.

\begin{example}\label{ex:CTA}
    For the trace $\pi = \langle 1.0, \beta \rangle \langle 3.0, \beta \rangle (\langle 2.0, \alpha \rangle)^\omega$ and the set of events $\cause = \{(\beta, 1), (3.0, 2)\}$, we depict the counterfactual trace automaton $\TA^\cause_\pi$ in~Figure~\ref{fig:CTA}.  For the first action and the second delay, arbitrary interventions are allowed, all other action and delay events are enforced to be as in $\pi$. 
\end{example}

\subsection{But-For Causality in Networks of Timed Automata}\label{subsec:butforcausality}

We now use the construction from the previous section to define counterfactual causes for MITL-expressible effects on runs of networks of timed automata. In practice, an effect $\phi$ may be the violation of a specification $\psi$, such that the effect corresponds to the negation of the specification: $\phi \equiv \lnot \psi$.
The main idea of our definition is to isolate the local traces of the component automata, and then construct a counterfactual trace automaton (cf.~Definition~\ref{definiton:CFTA_Delay}) for each component, where the former intervenes on the events in a given cause that refer to the specific component. Afterward, each counterfactual trace automaton is intersected with its corresponding component automaton, and the network of all these intersections describes the counterfactual runs after intervention. To apply interventions locally, we start by defining the local projections of a run in a network of timed automata.

\begin{definition}[Local Projection]
    For a network $\TA^n = \TA_1 \; || \ldots || \; \TA_n$ and one of its runs $\rho$, we denote $\{j_1, \ldots, j_l\} := \{ \, j \in \mathbb{N} \; | \; \mathit{participates}(\TA_i,\alpha^\rho_j)\}$ as the the event points of some component automaton $\TA_i$, whereby we let $j_1 < \ldots < j_l$. Then the local projection $\rho(\TA_i)$ of the component automaton  $\TA_i$ is defined as the trace $\rho(\TA_i) := \langle \delta_1, \alpha_1 \rangle\langle \delta_2, \alpha_2 \rangle\ldots$, in which
    
    \begin{itemize}
        \item $\alpha^{\rho(\TA_i)}_{k} := \mathit{action}(\TA_i, \alpha^\rho_{j_k})$ for all $k = 1, 2, \ldots$, i.e., the identity of the actions is preserved;
        \item $\delta^{\rho(\TA_i)}_k = \Sigma_{x = j_{k-1} +1,\ldots, j_k} \, \delta^\rho_x$ for all $k = 1, 2, \ldots$ and with $j_0 := 1$,  i.e., the delays in the local projection are the cumulative delays between two actions of the automaton in the global run of the network. 
    \end{itemize} 
    
    \noindent Furthermore, we denote with $\mathit{locations}(\rho, \TA_i) :=(q^\rho_{0})_i, (q^\rho_{j_1})_i,(q^\rho_{j_2})_i, \ldots$ the sequence of local locations, i.e., the projection to the i-th component of the network location.
We define the localization function as $\mathit{localize}(\rho,\TA^n) := (\rho(\TA_1), \ldots, \rho(\TA_n))$.
\end{definition}

Note that the local projection as defined here differs fundamentally from local runs as defined for the local time semantics of timed automata~\cite{DBLP:conf/concur/BengtssonJLY98}, as the clocks still advance globally at the same speed. However, by conducting counterfactual interventions on the delays in a local projection of a run, we are able to change the order of transitions, which is not possible by interacting with delays in the global run of the network. It should also be noted that even if the global run $\rho$ is infinite, the local projections may still turn out to be finite because the transitions occurring infinitely often may stem from a subset of the automata.

\begin{example}
    For the run $\rho$ from~Subsection~\ref{subsec:example}, the first local projection $\rho(\TA_1)$ is exactly the trace $\pi$ considered in~Example~\ref{ex:CTA} and $\mathit{locations}(\rho, \TA_1) = \mathsf{init}, \, \mathsf{crit}, \, (\mathsf{init})^\omega$ as the sequence of local locations. The second local projection $\rho(\TA_2)$ is the finite trace $\rho(\TA_2) = \langle2.0, \beta\rangle \langle3.0, \beta\rangle$.
\end{example}

It is worth pointing out that every global run induces well-defined local projections, however, the tuple $\mathit{localize}(\rho,\TA)$ of local traces may have multiple associated global runs. This stems from nondeterminism in the order of actions happening at the same timepoint. In essence, we treat the scheduler's decisions in such a situation as nondeterministic, and allow different resolution of this nondeterminism in counterfactual runs of the network.

\begin{proposition}
    The localization function is not injective: There exists a network $\TA$ and two runs $\rho \neq \rho'$ such that $\mathit{localize}(\rho,\TA) = \mathit{localize}(\rho',\TA)$.
\end{proposition}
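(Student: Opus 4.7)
The plan is to exhibit a concrete counterexample that exploits nondeterminism in the ordering of simultaneous actions from distinct components, exactly as the paragraph preceding the proposition already hints.

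I would take the smallest possible network $\TA = \TA_1 \;||\; \TA_2$ where $\TA_1$ and $\TA_2$ are two non-synchronizing single-action components: each has an initial location with a self-loop labeled by a distinct internal action (say $\alpha_1$ on $\TA_1$ and $\alpha_2$ on $\TA_2$), no clock guards, no invariants, and an empty shared set of synchronizing actions. By the construction of the parallel composition, the only edges of $\TA$ at the initial product location are the two internal edges $\langle \TA_1,\TA_1,\alpha_1\rangle$ and $\langle \TA_2,\TA_2,\alpha_2\rangle$, which do not interfere with each other.

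Next I would consider two finite runs that agree on the set of time stamps but differ in the order in which the two actions are scheduled at the same point in time. Concretely, let
\[
\rho \;=\; (q_0,u_0) \xrightarrow{1} \xrightarrow{\langle \TA_1,\TA_1,\alpha_1\rangle} \xrightarrow{0} \xrightarrow{\langle \TA_2,\TA_2,\alpha_2\rangle} \ldots
\]
and
\[
\rho' \;=\; (q_0,u_0) \xrightarrow{1} \xrightarrow{\langle \TA_2,\TA_2,\alpha_2\rangle} \xrightarrow{0} \xrightarrow{\langle \TA_1,\TA_1,\alpha_1\rangle} \ldots,
\]
followed by the same behavior thereafter (for instance a $\TA_1$-loop of period $1$). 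These are distinct runs because they disagree already on the label of their first action transition, i.e.\ $\alpha^\rho_1 \neq \alpha^{\rho'}_1$.

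Then I would simply compute $\mathit{localize}(\rho,\TA)$ and $\mathit{localize}(\rho',\TA)$ from the definition and observe they coincide: in both cases $\TA_i$ participates only at the global index where its own action occurs, and the cumulative delay preceding that occurrence equals $1+0 = 1$ in both runs. Hence $\rho(\TA_i) = \rho'(\TA_i)$ for $i \in \{1,2\}$, so $\mathit{localize}(\rho,\TA) = \mathit{localize}(\rho',\TA)$ while $\rho \neq \rho'$. There is no real obstacle here; the only thing to be careful about is to pick actions that do \emph{not} synchronize, so that both orderings are legal runs of the network, and to check the delay bookkeeping in the local projection $\delta^{\rho(\TA_i)}_k = \sum_{x=j_{k-1}+1}^{j_k} \delta^\rho_x$ which absorbs the intervening zero delay on the ``other'' component.
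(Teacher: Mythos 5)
Your construction is correct and is exactly the argument the paper has in mind: the paper treats this proposition as immediate from the observation that the order of actions occurring at the same global timepoint is nondeterministic, and your two interleavings of simultaneous, non-synchronizing actions (with the intervening zero delay absorbed into the cumulative local delay) make that precise. The only cosmetic quibble is that runs in the paper are infinite, so the phrase ``two finite runs'' should be dropped in favor of the infinite continuations you already supply.
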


Since we want to apply the construction of the counterfactual trace automaton locally to every component, we lift the definition of events from traces to (network) runs, such that the events of a network run are the union of events on local projections of the run.

\begin{definition}[Events of a Network Run]
    Given a run $\rho$ of a network $\TA_1 \; || \ldots || \; \TA_n$, we define the set of associated events as 
    $$\mathcal{E}_{\rho} := \{ \, (e,i,\TA_k) \mid (e,i) \in \mathcal{E}_{\rho(\TA_k)} \enspace \text{\emph{for some component}} \enspace 1 \leq k \leq n \, \} \enspace .$$
     We lift the set of all events to a network $\TA^n$ and define it as $\mathcal{E}(\TA^n) = \{(e,i,\TA_k) \mid (e,i) \in \mathcal{E} \land 1 \leq k \leq n\}$ and say a run $\rho$ satisfies a set of events $\cause \subseteq \mathcal{E}(\TA^n)$, denoted $\rho \models \cause$, if $\cause$ is a subset of the events on $\rho$, i.e., if $\cause \subseteq \events_\rho$. We further define an operator to filter for events of a specific component $k$: $\mathcal{C}|_{k} := \{(e,i) \mid (e,i,\TA_k) \in \mathcal{C} \} $.
\end{definition}

Note that $\mathcal{E}_{\rho(\TA_k)}$ contains both action events as well as \emph{locally projected} delay events. Hence, when we speak about the events on a network run we talk about the actions of the respective component automata (identified through the third position in the event tuple), as well as about the time between these actions of a component automaton (i.e., the cumulative delays between two actions of a component). These events are the atomic building blocks of our counterfactual expalantions. With this at hand we can define our first notion of counterfactual causality based on allowing arbitrary alternatives for all the events appearing in a hypothetical cause. The corresponding notion for structural equation models was termed \emph{but-for} causality by Halpern~\cite{halpern2016}, so we adopt the same name here. Crucially, in our setting with networks of timed automata, the alternative delays and events are realized with respect to the local projections of the network run, such that an alternative delay can change the order of actions emerging in different component automata.

\begin{definition}[But-For Causality in Real-Time Systems]\label{def:bf-cause-RT-delay}
	Let $\TA_1 \; || \ldots || \; \TA_n $ be a network of timed automata and $\rho$ a run of the network.
	A set of events $\cause \subseteq \mathcal{E}(\TA^n)$ is a but-for cause for $\phi$ in $\rho$ of $\TA$, if the following three conditions hold: 
	\begin{description}
		\item[\textbf{SAT}] $\rho \models \cause$ and $\rho \models \phi$, i.e., cause and effect are satisfied by the actual run.
		\item[\CFBF] There is an intervention on the events in $\cause$ s.t.\ the resulting run avoids the effect $\phi$, i.e., we have $$(\TA_1 \inter \TA^{\cause|_1}_{\rho(\TA_1)}) \; || \ldots || \; (\TA_n \inter \TA^{\cause|_n}_{\rho(\TA_n)}) \not\models \phi \enspace .$$
		\item[\textbf{MIN}] $\cause$ is minimal, i.e., no strict subset of $\cause$ satisfies \emph{\textbf{SAT}} and \emph{\CFBF}.
	\end{description}
\end{definition}

\begin{example}\label{ex:butfor}
    Consider again the system and run from Subsection~\ref{subsec:example}, and the cause $\cause = \{(\beta,1,\TA_1),(\beta,2,\TA_1)\}$, i.e., the two $\beta$-actions of the first component (Cause 3a). \textbf{SAT} is satisfied, since the effect $\lnot \TLAlways_{[0, \infty)} (\lnot \mathsf{crit_1} \lor \lnot \mathsf{crit_2})$, i.e., the negation of the MITL specification is satisfied and the local projection of $\TA_1$ is $\rho(\TA_1) = \langle 1.0, \beta \rangle \langle 3.0, \beta \rangle (\langle 2.0, \alpha \rangle)^\omega$. For \CFBF{}, consider that the network run emerging from setting $\TA_1$'s local projection to $\langle1.0, \alpha\rangle \langle3.0, \alpha\rangle (\langle 2.0, \alpha \rangle)^\omega$ does not violate the specification since the first component never enters $\mathsf{crit_1}$. To see that $\cause$ also satisfies \textbf{MIN} consider its two singleton subsets. Setting the alternative $\alpha$ for either of the actions alone does not suffice to avoid the effect due to the temporal ordering of the $\beta$-actions, e.g., when intervening only on the first $\beta$, then the second $\beta$ enters $\mathsf{crit_1}$ while the second component is also in its critical section, hence the effect is still present. Similarly, we can show $\{2.0, 1, \TA_2\}$, i.e., the first delay of the second component (Cause 2), as well as all the other but-for causes from Table~\ref{tab:intro_table} to be but-for causes for $\phi$ in $\rho$.
\end{example}

Besides this intuitive example, we can prove several sanity properties about but-for causality. These properties concern the existence and identity of causes in certain distinctive cases. First up, we show that the existence of a but-for cause is guaranteed as long as a system run avoiding the effect exists.

\begin{proposition}\label{prop:cause_existence}
	Given an effect $\phi$ and a network of timed automata $\TA^n = \TA_1 \; || \ldots || \; \TA_n $, then for every run $\rho$ of the network in which $\phi$ appears, there is a but-for cause for $\phi$ in $\rho$ of \,$\TA^n$, if and only if there exists a run $\rho'$ of the network with $\rho' \not\models \phi$.
\end{proposition}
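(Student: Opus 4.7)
The proof establishes the two directions of the biconditional separately.

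For ($\Rightarrow$), suppose $\cause$ is a but-for cause for $\phi$ in $\rho$ of $\TA^n$. Condition \CFBF{} yields a run $\rho^{cf}$ of the counterfactual network $\bigl(\TA_1 \inter \TA^{\cause|_1}_{\rho(\TA_1)}\bigr) \; || \ldots || \; \bigl(\TA_n \inter \TA^{\cause|_n}_{\rho(\TA_n)}\bigr)$ with $\rho^{cf} \not\models \phi$. The key observation is that the action-intersection operation only restricts behavior: every edge of $\TA_k \inter \TA^{\cause|_k}_{\rho(\TA_k)}$ arises from an edge of $\TA_k$ with additional guards, invariants, and control-flow constraints imposed by the counterfactual trace automaton. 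Projecting out the auxiliary trace-automaton locations, $\rho^{cf}$ is then a run of $\TA_1 \; || \ldots || \; \TA_n = \TA^n$, and setting $\rho' := \rho^{cf}$ gives the desired counterfactual run.

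For ($\Leftarrow$), assume $\rho'$ is a run of $\TA^n$ with $\rho' \not\models \phi$. I would take the candidate cause $\cause_0$ consisting of every event in the finite lasso representation of $\rho$; then $\cause_0$ is finite, and \textbf{SAT} holds trivially. When every event on $\rho$ lies in $\cause_0$, each rule of Figure~\ref{fig:edge_rules} instantiates to its maximally relaxed form (arbitrary action, unconstrained delay, trivial invariant), so $\TA^{\cause_0|_k}_{\rho(\TA_k)}$ places no constraint beyond the lasso skeleton of $\rho(\TA_k)$. The step to establish is then that the resulting counterfactual network admits a run witnessing $\lnot \phi$, which I would obtain by aligning a suitable reparametrization of $\rho'$ with the skeleton imposed by the counterfactual construction, exploiting the total freedom on actions and delays at every step. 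Once \CFBF{} is witnessed for $\cause_0$, \textbf{MIN} follows by a finite descent: iteratively remove events from $\cause_0$ while \CFBF{} remains satisfied. Because $\cause_0$ is finite, this terminates in a minimal $\cause \subseteq \cause_0$ that retains \textbf{SAT} and \CFBF{}, and is hence a but-for cause.

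The principal obstacle is the \CFBF{} step of the backward direction. Since $\TA^{\cause_0|_k}_{\rho(\TA_k)}$ has exactly $|\rho(\TA_k)|$ locations arranged in a lasso, the counterfactual network is confined to runs whose per-component action counts per lasso iteration match those of $\rho$. Showing that some reparametrized variant of $\rho'$ fits this skeleton requires either invoking that MITL satisfaction depends only on the underlying signal (so that signal-equivalent runs can be substituted), or selecting the finite lasso projection of $\rho$ in a way that makes alignment possible. The remaining parts of the argument -- the intersection-restriction lemma for ($\Rightarrow$) and the finite-descent construction of a minimal cause in ($\Leftarrow$) -- are then straightforward.
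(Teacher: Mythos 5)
Your overall structure coincides with the paper's: the same two directions, the same candidate cause $\events_\rho$ (all events on $\rho$) for the backward direction, and the same finite-descent argument for \textbf{MIN}. The forward direction is fine and matches the paper's argument that trace intersection only restricts behavior, so $\Pi\big((\TA_1 \inter \TA^{\cause|_1}_{\rho(\TA_1)}) \; || \ldots || \; (\TA_n \inter \TA^{\cause|_n}_{\rho(\TA_n)})\big) \subseteq \Pi(\TA_1 \; || \ldots || \; \TA_n)$.

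The gap is exactly where you flag it: in the backward direction you leave the \CFBF{} step as an acknowledged obstacle (``aligning a suitable reparametrization of $\rho'$ with the skeleton'') rather than discharging it, and the obstacle you describe does not in fact exist. When every event of $\rho(\TA_k)$ is in the cause, only the fourth rule of Figure~\ref{fig:edge_rules} fires, so every location $i-1$ of $\TA^{\events_\rho|_k}_{\rho(\TA_k)}$ has an outgoing edge to $\mathit{dst}_\pi(i)$ for \emph{every} $\beta \in \Act$ with guard $\top$, and every invariant is $\top$. The lasso of locations is therefore an inert counter: it never blocks an action or a delay and simply wraps around as often as the component acts, so the counterfactual network is \emph{not} confined to runs whose per-component action counts per loop iteration match those of $\rho$. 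Any run of $\TA_k$ lifts to a run of $\TA_k \inter \TA^{\events_\rho|_k}_{\rho(\TA_k)}$ by annotating its actions with the cyclic counter; hence the intersection has the same runs as $\TA_k$, the parallel composition has the same runs as $\TA^n$, and $\rho'$ itself --- not a reparametrization of it --- witnesses \CFBF{}. This is precisely the identity $\Pi\big((\TA_1 \inter \TA^{\events_\rho|_1}_{\rho(\TA_1)}) \; || \ldots || \; (\TA_n \inter \TA^{\events_\rho|_n}_{\rho(\TA_n)})\big) = \Pi(\TA_1 \; || \ldots || \; \TA_n)$ that the paper's proof asserts; neither signal-equivalence of MITL nor a careful choice of the lasso projection is needed. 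With that identity established, your finite descent to a minimal subset (which also uses that \textbf{SAT} is preserved under subsets, cf.~Lemma~\ref{lemma:monotonicity}) completes the direction as in the paper.
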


Next, we consider the case where there is nondeterminism on the actual run, i.e., when there is another run with the same trace, that does no satisfy the effect. In this case, our definition returns the empty set as a unique actual cause.

\begin{proposition}\label{prop:cause-empty}
	Given an effect $\phi$ and a network of timed automata $\TA^n$, $\emptyset$ is the (unique) but-for cause for an effect $\phi$ on a run $\rho$ of $\TA^n$, if and only if there exists a run $\eta$ of $\TA^n$ with $\mathit{localize}(\rho,\TA^n) = \mathit{localize}(\eta,\TA^n)$ and $\eta \not\models \phi$, i.e., a run with the same local traces as the actual run, that does, however, not satisfy the effect.
\end{proposition}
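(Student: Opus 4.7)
The plan is to prove the two directions of the biconditional separately, both hinging on the observation that for an empty cause $\cause = \emptyset$, the counterfactual trace automaton $\TA^{\emptyset}_{\rho(\TA_k)}$ admits exactly the single trace $\rho(\TA_k)$: inspecting the rules in Figure~\ref{fig:edge_rules}, when neither the action nor the delay event lies in $\cause$, the first rule forces the exact delay $d = \delta^{\pi}_i$ and the exact action $\alpha^{\pi}_i$, while the invariant $d \leq \delta^{\pi}_{q+1}$ prevents any excess waiting. Hence a run of the intersected component network $(\TA_1 \inter \TA^{\emptyset|_1}_{\rho(\TA_1)}) \; || \ldots || \; (\TA_n \inter \TA^{\emptyset|_n}_{\rho(\TA_n)})$ is precisely a run of $\TA^n$ whose local projections all coincide with those of $\rho$, i.e., a run $\eta$ with $\mathit{localize}(\eta,\TA^n) = \mathit{localize}(\rho,\TA^n)$.

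For the ``only if'' direction, I would assume that $\emptyset$ is a but-for cause and extract a witness $\eta$ directly from condition \CFBF: there must be some run of the intersected network that does not satisfy $\phi$, and by the above observation this run has the same local projections as $\rho$, so it is the required $\eta$. For the ``if'' direction, I would verify the three conditions of Definition~\ref{def:bf-cause-RT-delay}: \textbf{SAT} holds since the empty set is vacuously satisfied by any run and $\rho \models \phi$ by assumption; \CFBF{} follows because the given $\eta$ is, by the observation, a run of the intersected network that witnesses $\not\models \phi$; and \textbf{MIN} is immediate since $\emptyset$ has no strict subsets.

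Uniqueness follows from \textbf{MIN}: if $\emptyset$ satisfies \textbf{SAT} and \CFBF{}, then for any non-empty $\cause' \subseteq \events(\TA^n)$ the strict subset $\emptyset \subsetneq \cause'$ already satisfies \textbf{SAT} and \CFBF{}, so $\cause'$ violates minimality and cannot be a but-for cause.

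The only subtlety I anticipate is a careful argument that a run of the product network $(\TA_1 \inter \TA^{\emptyset|_1}_{\rho(\TA_1)}) \; || \ldots || \; (\TA_n \inter \TA^{\emptyset|_n}_{\rho(\TA_n)})$ can be identified with a run $\eta$ of $\TA^n$ satisfying $\mathit{localize}(\eta,\TA^n) = \mathit{localize}(\rho,\TA^n)$, and vice versa. This requires unfolding the parallel composition and checking that the synchronization constraints on actions together with the per-component enforcement of the local trace produce exactly the runs of $\TA^n$ with matching local projections; the scheduling nondeterminism between simultaneous local events, which was observed in the preceding non-injectivity proposition, is precisely what makes this set potentially larger than $\{\rho\}$ and allows $\eta \neq \rho$ to exist.
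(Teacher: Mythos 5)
Your proposal is correct and follows essentially the same route as the paper's own proof: both rest on the observation that $\TA^{\emptyset}_{\rho(\TA_i)}$ admits exactly the local trace $\rho(\TA_i)$, so the intersected network's runs are precisely the runs of $\TA^n$ with the same local projections as $\rho$, and both argue uniqueness via \textbf{MIN}. The subtlety you flag about identifying runs of the product with runs of $\TA^n$ under matching localizations is handled in the paper at the same level of detail, by appeal to the definitions of trace intersection and parallel composition.
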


From a philosophical point of view, the empty set is a desirable verdict: It conveys that the smallest change necessary to avoid the effect does not consist of any changes of delay or action events, instead simply an alternative resolution of the underlying nondeterminism of this trace suffices to obtain a witnessing counterfactual run. Also from a practical perspective, it is helpful to know that the empty set gets returned only in this distinguishable scenario.

\subsection{Contingencies in Networks of Timed Automata}\label{subsec:contingencies}

Actual causality employs a \emph{contingency} mechanism to isolate the true cause in the case of preemption. The key idea of contingencies to overcome this preemption is to reset certain propositions in counterfactual executions to their value as it is in the actual world. Coenen et al.~\cite{coenen2022} have outlined how to model contingencies for lasso-shaped traces of a Moore machine. We now describe a construction that applies this idea to networks of timed automata. The central idea is that the state resets resulting from applying a contingency now do not only reset the discrete machine state, but the clock assignment and the location of the timed automaton, i.e., the full underlying state. However, a central issue in networks of timed automata is that clocks are global variables shared by all component automata of the network, while the location is a local attribute of single components. We respect this dichotomy by allowing location contingencies only by actions of the corresponding component automaton and clock contingencies by any action in the global network. 
This is realized by two automata constructions, i.e., a local one applied to all component automata (for resetting locations) and a global one applied to the full network (for resetting clocks). In both cases, we model the behavior as an updatable timed automaton, as we outline in the following.

\begin{definition}[Location Contingency Automaton]\label{definiton:location_contingencies}
	Let $\rho$ be a lasso-shaped run of a network and the timed automaton $\TA = (\Loc, q_0, X, \trel, I,L)$ a component of this network. The location contingency automaton of $\TA$ and $\rho$ is defined as $\TA^{\mathsf{loc}(\rho)} := (\Loc', q_0', X, \trel', I',L')$ with $\Loc' := \Loc \times \{0, \dots,|\mathit{locations}(\rho,\TA)|-1\}$, $q_0' := \langle q_0, 0 \rangle$, $I' (\langle q, i \rangle) :=  I(q)$,  $L'(\langle q, i \rangle) :=  L(q)$, and $\trel'$ is defined as follows, where $\pi = \mathit{localize}(\rho,\TA)$. 
		\begin{mathpar}
			\inferrule*
			{(q,g,\alpha,U,q') \in E\\ \!\!\!\!i = 1, \dots, |\rho|}
			{(\langle q, i-1 \rangle ,g,\alpha,U,\langle q',\mathit{dst}_{\pi}(i)\rangle) \in \trel'}
			\and
            \inferrule*
			{(q,g,\alpha,U,q') \in E\\ \!\!\!\!i = 1, \dots, |\rho|}
			{(\langle q, i-1 \rangle ,g,\alpha,U,\langle q^\pi_{j},\mathit{dst}_{\pi}(i)\rangle) \in \trel'}
		\end{mathpar}
\end{definition}
The location contingency automaton $\TAloc_\rho$ hence consists of copies of the original system, one for each position in the lasso-shaped local projection $\pi$. With an action transition, it moves from one copy into the next, either following the edge $(q,g,\alpha,U,q')$ of the original system (left rule in Definition~\ref{definiton:location_contingencies}) or moving to the same location $q^\pi_{j}$ as present in $\pi$ at the respective position $\mathit{dst}_\pi(i)$ by applying a contingency (right rule in Definition~\ref{definiton:location_contingencies}). Note that after the end of the loop in the lasso-shaped projection, the transitions are redirected to the copy corresponding to the initial position of the loop by the definition of the function $\mathit{dst}_\pi$. The same principle can now also be applied to global variables. In our setting, this only concerns clocks, but the following definition of the clock contingency automaton can be generalized to all global variables such as integers, if these are included in the system model.

\begin{definition}[Clock Contingency Automaton]\label{definiton:clock_contingencies}
	Let $\rho$ be a lasso-shaped run of a timed automaton $\TA = (\Loc, q_0, X, \trel, I,L)$. The clock contingency automaton of $\TA$ and $\rho$ is defined as $\TA^{\mathsf{clk}(\rho)} := (\Loc', q_0', X, \trel', I',L')$ with $\Loc' := \Loc \times \{0, \dots,|\rho|-1\}$, $q_0' := \langle q_0, 0 \rangle$, $I' (\langle q, i \rangle) :=  I(q)$,  $L'(\langle q, i \rangle) :=  L(q)$, and $\trel'$ is defined as follows. 
		\begin{mathpar}
			\inferrule*
			{(q,g,\alpha,U,q') \in E\\ \!\!\!\!i = 1, \dots, |\rho|}
			{(\langle q, i-1 \rangle ,g,\alpha,U,\langle q',\mathit{dst}_\rho(i)\rangle) \in \trel'}
			\and
            \inferrule*
			{(q,g,\alpha,U,q') \in E\\ \!\!\!\!i = 1, \dots, |\rho|}
			{(\langle q, i-1 \rangle ,g,\alpha,u^\rho_j,\langle q',\mathit{dst}_\rho(i)\rangle) \in \trel'}
		\end{mathpar}
\end{definition}

Note that strictly speaking we have defined clock updates to values in $\mathbb{Q}$, instead of $\mathbb{N}$ as considered in classic decidability results. It is, however, straightforward to scale these values to the natural numbers~\cite{AlurD94}.
Clearly, the signals modeled by the contingency automata subsume the ones by the original automaton, because it is possible to simply never invoke a contingency and, hence, always follow the dynamics of the original system.

\begin{proposition}\label{prop:subsumption}
    For all timed automata $\TA$ and runs $\rho$ of $\TA$, we have that the languages of the contingency automata subsume the language of the original automaton: $\mathcal{L}(\TAloc_\rho) \supseteq \mathcal{L}(\TA)$ and $\mathcal{L}(\TAclk_\rho) \supseteq \mathcal{L}(\TA)$.
\end{proposition}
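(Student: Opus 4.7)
The plan is to prove both inclusions via explicit simulation: for every run $\rho'$ of $\TA$ producing some signal $\sigma \in \mathcal{L}(\TA)$, I construct a run of $\TAloc_\rho$ (respectively $\TAclk_\rho$) producing the same $\sigma$. The key observation is that both contingency constructions merely extend each location $q$ with a counter component $i$, and in each case the first of the two edge rules -- call it the \emph{non-contingency rule} -- preserves the guard, action, and update of the underlying edge of $\TA$ while only advancing the counter via $\mathit{dst}_\pi$ or $\mathit{dst}_\rho$. Since invariants and labels are directly inherited, i.e., $I'(\langle q, i \rangle) = I(q)$ and $L'(\langle q, i \rangle) = L(q)$, no additional constraints are placed on delays or on the emitted atomic propositions.

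Concretely, given a run $\rho' = (q_0, u_0) \xrightarrow{\delta_1}\xrightarrow{\alpha_1} (q_1, u_1) \xrightarrow{\delta_2}\xrightarrow{\alpha_2} \ldots$ of $\TA$, I would define a run of $\TAloc_\rho$ by pairing each $q_k$ with a counter $i_k$ determined inductively from $i_0 := 0$ via the $\mathit{dst}_\pi$ function, always selecting the non-contingency rule at each action step. A straightforward induction then shows this sequence is a valid run: delay transitions succeed because $I'(\langle q_k, i_k \rangle) = I(q_k)$, and every action transition $(q_k, g, \alpha_{k+1}, U, q_{k+1}) \in \trel$ taken by $\rho'$ has a matching non-contingency transition $(\langle q_k, i_k \rangle, g, \alpha_{k+1}, U, \langle q_{k+1}, i_{k+1} \rangle) \in \trel'$ available by Definition~\ref{definiton:location_contingencies}. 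Since clock assignments and location labels along the constructed run coincide pointwise with those along $\rho'$, the two runs induce the same signal, yielding $\sigma \in \mathcal{L}(\TAloc_\rho)$. The argument for $\TAclk_\rho$ is verbatim, with $\mathit{dst}_\rho$ replacing $\mathit{dst}_\pi$ and the counter range $\{0, \ldots, |\rho|-1\}$ in place of $\{0, \ldots, |\mathit{locations}(\rho,\TA)|-1\}$.

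I do not expect this proof to present any real obstacle; the only care needed is correct bookkeeping of the counter component, which by design of $\mathit{dst}_\pi$ (respectively $\mathit{dst}_\rho$) cycles back into the loop of $\rho$ once the end of the lasso is reached and thus never blocks a transition, regardless of the length or shape of $\rho'$. Hence, any (finite or infinite) run of $\TA$ is faithfully mirrored in the contingency automaton, establishing both inclusions.
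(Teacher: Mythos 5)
Your proof is correct and matches the paper's own (very brief) justification, which simply observes that one can always decline to invoke a contingency and follow the original dynamics; you formalize exactly this by always selecting the non-contingency edge rule, noting that guards, updates, invariants, and labels are inherited and that the counter bookkeeping via $\mathit{dst}_\pi$ (resp.\ $\mathit{dst}_\rho$) never blocks. No substantive difference in approach.
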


\begin{definition}[Actual Causality in Real-Time Systems]\label{def:act-cause-RT-delay}
	Let $\TA_1 \; || \ldots || \; \TA_n $ be a network of timed automata and $\rho$ a run of the network.
	A set of events $\cause \subseteq \mathcal{E}(\TA^n)$ is an actual cause for $\phi$ in $\rho$ of $\TA$, if the following three conditions hold: 
	\begin{description}
		\item[\textbf{SAT}] $\rho \models \cause$ and $\rho \models \phi$, i.e., cause and effect are satisfied by the actual run.
		\item[\CFAct] There is an intervention on the events in $\cause$ and a location and clock contingency (denoted by $\mathsf{loc}(\rho)$ and $\mathsf{clk}(\rho)$ resp.) s.t. the resulting run avoids the effect $\phi$, i.e., we have $$\big((\TA_1^{\mathsf{loc}(\rho)} \inter \TA^{\cause|_1}_{\rho(\TA_1)}) \; || \ldots || \; (\TA_n^{\mathsf{loc}(\rho)} \inter \TA^{\cause|_n}_{\rho(\TA_n)})\big)^{\mathsf{clk}(\rho)} \not\models \phi \enspace .$$
		\item[\textbf{MIN}] $\cause$ is minimal, i.e., no strict subset of $\cause$ satisfies \emph{\textbf{SAT}} and \CFAct.
	\end{description}
\end{definition}

\begin{example}\label{ex:actual-cause}
Consider the but-for cause $\cause = \{(\beta,1,\TA_1),(\beta,2,\TA_1)\}$ from Example~\ref{ex:butfor}. This $\cause$ is not an \emph{actual} cause because it does not satisfy the \textbf{MIN} condition of Definition~\ref{def:act-cause-RT-delay}: The subset $\cause' = \{(\beta,1,\TA_1)\}$ satisfies \textbf{SAT} and \CFAct. For \CFAct{} we can use contingencies to neutralize the effect of the second $\beta$ in the local projection $\rho(\TA_1) = \langle 1.0, \beta \rangle \langle 3.0, \beta \rangle (\langle 2.0, \alpha \rangle)^\omega$. Since this action moves to $\mathsf{init}$ in the original run (cf.~Subsection~\ref{subsec:example}), it can also move to this location in the contingency automaton $\TA_1^{\mathsf{loc}(\rho)}$. Hence we find an intervention (setting $\langle 1.0, \boldsymbol{\beta} \rangle$ to $\langle 1.0, \boldsymbol{\alpha} \rangle$) and a contingency (setting the target location of $\langle 3.0, \beta \rangle$ to $\mathsf{init}$) that avoid the effect together. A more detailed construction of the contingency automaton is given in Appendix~\ref{app:contingency}. In fact, $\cause' = \{(\beta,1,\TA_1)\}$ is an actual cause (Cause 3) since additionally to \textbf{SAT} and \CFAct{} it also satisfies \textbf{MIN} -- the empty set does not satisfy \CFAct{}. Again, also all the other actual causes from Table \ref{tab:intro_table} can be shown to fulfill our definition. 
\end{example}

\begin{remark}
    Note that as a consequence of Proposition~\ref{prop:subsumption}, the statements regarding the existence and identity of causes as proven in Propositions~\ref{prop:cause_existence} and~\ref{prop:cause-empty} can be lifted to actual causality, but require replacing the original networks in the equivalence statements by the contingency automata construction used in \CFAct{} (cf.~Definition~\ref{def:act-cause-RT-delay}).
\end{remark}

\section{Computing Counterfactual Causes}\label{sec:algorithms} 

In this section, we develop algorithms for computing but-for and actual causes for any MITL effect.  Proofs and further details related to this section can be found in Appendix~\ref{app:alg}. We only explicitly present the algorithm for but-for causes; for actual causes the central model checking query needs to be substituted (cf.~Definition~\ref{def:act-cause-RT-delay}, Lines \ref{algcom:mc} and \ref{algcom:mc1} in  Algorithm~\ref{alg:computing}). 

In principle, the algorithms are based on enumerating all candidate causes. However, we can speed up this process significantly by utilizing what we term the monotonicity of causes.

\begin{lemma}[Cause Monotonicity]\label{lemma:monotonicity}
	For every network of timed automaton $\TA$, run $\rho$, and effect $\phi$, we have that
	\begin{enumerate}
		\item if a set of events $\cause$ fulfills \emph{\textbf{SAT}} also every subset $\cause' \subseteq \cause$ fulfills \emph{\textbf{SAT}}.
		\item   if a set of events $\cause$ fulfills \emph{\CFBF}~(fulfills \emph{\CFAct}) also every superset $\cause' \supseteq \cause$ fulfills \emph{\CFBF}~(fulfills \emph{\CFAct}).
	\end{enumerate}
\end{lemma}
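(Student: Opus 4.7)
The plan is to handle each part of the lemma by reducing it to a monotonicity property of the underlying automaton construction. Claim (1) will follow immediately from the fact that $\rho \models \cause$ is defined as $\cause \subseteq \events_\rho$, which is monotone in $\cause$ under taking subsets. Claim (2) will hinge on showing that the counterfactual trace automaton $\TA^{\cause}_\pi$ becomes strictly more permissive as $\cause$ grows, so that the intervention network built from a superset subsumes every behaviour of the network built from the original cause.

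For claim (1), I would simply unfold the definition of \textbf{SAT}: it asks for $\rho \models \phi$ and $\rho \models \cause$. The first conjunct is independent of $\cause$ and thus transfers verbatim to any $\cause' \subseteq \cause$. For the second, whenever $\cause \subseteq \events_\rho$ and $\cause' \subseteq \cause$, transitivity of $\subseteq$ immediately gives $\cause' \subseteq \events_\rho$, i.e., $\rho \models \cause'$.

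For claim (2), the central step is to prove that for $\cause \subseteq \cause'$, every run of $(\TA_1 \inter \TA^{\cause|_1}_{\rho(\TA_1)}) \; || \ldots || \; (\TA_n \inter \TA^{\cause|_n}_{\rho(\TA_n)})$ is also a run of the analogous network built from $\cause'$. I would argue this by a case analysis on the four rules in Figure~\ref{fig:edge_rules}: moving an event from outside $\cause$ into $\cause'$ can only weaken the corresponding edge, either by replacing the delay guard $d = \delta^\pi_i$ with $\top$, or by turning the fixed action $\alpha^\pi_i$ into an arbitrary $\beta \in \Act$, or both. Analogously, the invariant at the predecessor location relaxes from $d \leq \delta^\pi_{q+1}$ to $\top$ as soon as the incident delay event enters $\cause'$. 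Hence every edge enabled in $\TA^{\cause|_i}_{\rho(\TA_i)}$ is matched by a (possibly more permissive) edge of $\TA^{\cause'|_i}_{\rho(\TA_i)}$ between the same source and target locations. Lifting this edge-wise inclusion through the intersection with each component $\TA_i$ (which only further restricts actions) and through parallel composition (which only synchronizes actions and conjoins invariants) yields a run-inclusion on the full intervention networks. Consequently, any witness that establishes \CFBF{} for $\cause$ also establishes \CFBF{} for $\cause'$. For \CFAct{} the argument is identical, since the contingency constructions $\TA_i^{\mathsf{loc}(\rho)}$ and the outer $(-)^{\mathsf{clk}(\rho)}$ depend only on $\rho$ and are therefore unaffected when passing from $\cause$ to $\cause'$.

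The main obstacle I anticipate is dealing carefully with the auxiliary clock $d$ of the counterfactual trace automaton, whose valuation becomes part of the combined clock assignment once intersection and parallel composition are applied. Because $d$ is only ever used to enforce precisely the delay guards and invariants that strictly relax when events are added to $\cause$, the natural identity relation on location components should lift to a simulation carrying each reachable configuration of the smaller intervention network to a matching configuration of the larger one; however, this invariant preservation has to be stated and verified explicitly, in particular around the resets at the lasso back-edges and around the additional contingency transitions introduced by Definitions~\ref{definiton:location_contingencies} and~\ref{definiton:clock_contingencies}, to ensure the argument is tight.
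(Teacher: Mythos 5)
Your proof follows essentially the same route as the paper's: part (1) by transitivity of set inclusion, and part (2) by observing that enlarging $\cause$ only relaxes the guards and invariants of each counterfactual trace automaton, so the run sets of the intervention networks grow and the witness run for $\cause$ carries over to every superset $\cause'$ (and the contingency constructions, depending only on $\rho$, are unaffected). If anything, your edge-wise case analysis is slightly more careful than the paper's one-line claim that the transition relation of $\TA^{\cause'|_i}_{\rho(\TA_i)}$ is a superset of that of $\TA^{\cause|_i}_{\rho(\TA_i)}$ (strictly, edges are replaced by more permissive ones rather than literally added), but the substance is identical.
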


The second monotonicity property enables efficient checking of the \textbf{MIN} condition, as it suffices to check only the subsets with one element less instead of checking all subsets of a potential cause. For the computation of causes on a given run $\rho$, a naive approach could now simply enumerate all elements of $\mathcal{P}(\events_\rho)$, that is, all subsets of the all events $\events_\rho$ on $\rho$, and check whether they form a cause. By further exploiting monotonicity properties, we can find a more efficient enumeration significantly accelerating the computation of causes. The key idea is to enumerate through the powerset $\mathcal{P}(\events_\rho)$ simultaneously from below (starting with the empty cause and then causes of increasing size) and above (starting with the full cause and then causes of decreasing size). This then allows to exclude certain parts of the powerset from the computation in two ways: First, when finding a set of events $\cause$ fulfilling \CFBF, we can exclude all of its supersets as we know that they cannot satisfy \textbf{MIN}. Second, when finding a set of events $\cause$ not fulfilling \CFBF, we can exclude all of its subsets as the monotonicity of \CFBF~implies that $\cause' \subseteq \cause$ will neither fulfill \CFBF. This idea of the simultaneous enumeration of $\mathcal{P}(\events_\rho)$ is implemented in Algorithm~\ref{alg:computing}.

\begin{algorithm}[t]
	\caption{\textsf{Compute But-For Causes}}    \label{alg:computing}
	\KwIn{network $\TA = \TA_1 \; || \ldots || \; \TA_n$, run $\rho$ of $\TA$ satisfying effect $\phi$, i.e.\ $\rho \models \phi$}
	\KwOut{set of all but-for causes for $\phi$ in $\rho$ of $\TA$}
	$Res_s := \{\}, Res_l := \{\}, Power := \mathcal{P}(\events_\rho)$ \\
        \For{$i = 0, 1, 2, \ldots, \frac{|\events_\rho|}{2}$}{
            \For{$\cause \in Power$ with $|\cause| = i$:}{
                \If(\tcp*[f]{cause found?\hspace{-7pt}}){$(\TA_1 \inter \TA^{\cause|_1}_{\rho(\TA_1)}) \; || \ldots || \; (\TA_n \inter \TA^{\cause|_n}_{\rho(\TA_n)}) \not\models \phi$\label{algcom:mc}}{
                    $Res_s := Res_s \cup \cause$ \\ \label{algcom:adding1}
                    $Power := \{ \cause' \in Power \, | \, \cause \not\subseteq \cause'\}$\tcp*{remove all supersets}\label{algcom:remove1}
                }
            }
            \For{$\cause \in Power$ with $|\cause| = \frac{|\events_\rho|}{2} - i$:}{
                \eIf(\tcp*[f]{cause found?\hspace{-7pt}}){$(\TA_1 \inter \TA^{\cause|_1}_{\rho(\TA_1)}) \; || \ldots || \; (\TA_n \inter \TA^{\cause|_n}_{\rho(\TA_n)}) \not\models \phi$\label{algcom:mc1}}{
                    $Res_l := Res_l \cup \cause$ \\ \label{algcom:adding2}
                }
                {$Power := \{ \cause' \in Power \, | \, \cause' \not\subseteq \cause\}$\tcp*{remove all subsets}\label{algcom:remove2}
                }
            }
        }
        \Return $Res_s \cup \{\cause \in Res_l \, | \, \neg \exists \, \cause' \subsetneq \cause. \; \cause' \in Res_s \cup Res_l\}$\tcp*{filter $Res_l$ for \textbf{MIN}} \label{algcom:result}
\end{algorithm}

\begin{theorem}
    Algorithm~\ref{alg:computing} is sound and complete, i.e., it terminates with 
    $$\textsf{Compute But-For Causes}(\TA, \rho, \phi) = \{\,\cause \, | \, \cause \text{ is a but-for cause for }\phi \text{ in } \rho \text{ of } \TA \, \} \enspace ,$$     
    for all networks $\TA = \TA_1 \; || \ldots || \; \TA_n$ and runs $\rho$ of $\TA$ satisfying an effect $\phi$.
\end{theorem}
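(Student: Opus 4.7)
The plan is to prove termination, soundness, and completeness separately. Termination is immediate: since $\events_\rho$ is finite, $\mathcal{P}(\events_\rho)$ is finite, both loops perform finitely many iterations, and each candidate set is processed at most once.

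For soundness, I would show that every set returned is a but-for cause. \textbf{SAT} holds trivially because $\cause \subseteq \events_\rho$ (so $\rho \models \cause$) and $\rho \models \phi$ holds by hypothesis. \CFBF{} is enforced by the model-checking queries in Lines \ref{algcom:mc} and \ref{algcom:mc1}, which are the entry gates into $Res_s$ and $Res_l$. The delicate part is \textbf{MIN}. For $\cause \in Res_s$, I would argue by induction on $|\cause|$: when $\cause$ is processed at iteration $i = |\cause|$, any strict subset $\cause' \subsetneq \cause$ was either checked earlier and failed \CFBF{} (otherwise $\cause'$ would already lie in $Res_s$ and $\cause$ would have been removed by Line \ref{algcom:remove1}), or was itself removed as a superset of some even smaller $\cause'' \in Res_s$, but then $\cause'' \subsetneq \cause$ satisfies \CFBF{} and would again have forced $\cause$ out of $Power$. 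Hence no strict subset of $\cause$ satisfies \CFBF{}, so $\cause$ is \textbf{MIN}. For $\cause \in Res_l$ surviving Line \ref{algcom:result}, the filter directly excludes any $\cause$ with a strict subset in $Res_s \cup Res_l$; the completeness argument below will show that this is enough to rule out any strict subset satisfying \CFBF{}.

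For completeness, let $\cause$ be an arbitrary but-for cause. The central claim is that $\cause$ is never prematurely removed from $Power$. If $\cause$ were removed via Line \ref{algcom:remove1} as a superset of some $\cause' \in Res_s$, then $\cause' \subsetneq \cause$ with $\cause' \models {}$\CFBF{}, contradicting \textbf{MIN} for $\cause$. If $\cause$ were removed via Line \ref{algcom:remove2} as a subset of some $\cause''$ that failed \CFBF{}, then the second part of Lemma \ref{lemma:monotonicity} (contrapositive) implies $\cause$ itself fails \CFBF{}, contradicting that $\cause$ is a but-for cause. Hence $\cause$ remains in $Power$ when its size is reached, passes the model-checking check, and is placed in $Res_s$ or $Res_l$. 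It remains to verify that elements of $Res_l$ corresponding to but-for causes survive the final filter: suppose $\cause \in Res_l$ is a but-for cause but some $\cause^* \subsetneq \cause$ lies in $Res_s \cup Res_l$; then $\cause^*$ satisfies \CFBF{} (by the soundness of the check), contradicting \textbf{MIN} for $\cause$. Conversely, the argument above also shows that any strict subset $\cause'$ of $\cause$ satisfying \CFBF{} eventually surfaces in $Res_s \cup Res_l$: it cannot be dropped by Line \ref{algcom:remove2} (by monotonicity), and if it is dropped by Line \ref{algcom:remove1}, some even smaller subset already sits in $Res_s$, which again is a strict subset of $\cause$ — completing the soundness argument for $Res_l$.

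The main obstacle I anticipate is carefully tracking the interplay between the two enumeration directions and the two pruning operations. The key inductive statement — that no but-for cause is ever removed from $Power$ and that any set surviving to the \CFBF{}-check yields the correct \textbf{MIN}-status upon inclusion — has to chase chains of subset/superset removals back to an explicit element of $Res_s$, and the two monotonicity properties of Lemma \ref{lemma:monotonicity} must be applied in opposite directions for the two pruning lines. Once this invariant is pinned down, soundness and completeness both drop out cleanly.
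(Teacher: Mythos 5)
Your proposal is correct and follows essentially the same route as the paper: completeness via the invariant that a but-for cause is never pruned from $Power$ (its minimality blocks removal by Line~\ref{algcom:remove1}, monotonicity of \CFBF{} blocks removal by Line~\ref{algcom:remove2}), and soundness by tracing what happened to the strict subsets of any returned set, with the final filter handling $Res_l$. The only nitpick is that your dichotomy for a strict subset $\cause'$ of a set in $Res_s$ omits the case where $\cause'$ was pruned by Line~\ref{algcom:remove2} before ever being checked; this case closes immediately, since by Lemma~\ref{lemma:monotonicity} being a subset of a set failing \CFBF{} forces $\cause'$ to fail \CFBF{} as well, which is exactly the conclusion you need.
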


While it is clear that our algorithm requires to solve several model checking problems for the effect $\phi$, we can show that we cannot do better: Model checking some formula $\phi$ can be encoded as a cause checking problem. Hence, asymptotically, cause checking and computation scale similar to MITL model checking for the formula $\varphi$.

\begin{theorem}
    Checking and computing causes for an effect $\phi$ on the run $\rho$ in a network of timed automata $\TA$ is $\mathsf{EXPSPACE}(\phi)$-complete.
\end{theorem}

 Note that this discussion on the complexity with respect to the size of the effect abstracts away from the, e.g., the length of the counterexample, which contributes polynomially to cause checking and exponentially to cause computation since we need to check all subsets of events. In practice, we have observed that the bidirectional enumeration of the powerset realized in Algorithm~\ref{alg:computing} significantly speeds up the compuation of causes.

\begin{table}[!t]

  	\caption{Experimental results. $\boldsymbol{n}$: number of automata in the network; $\boldsymbol{|\rho |}$: run length; $\boldsymbol{|\events_\rho|}$: number of events on the run; $\boldsymbol{\# \cause}$: number of but-for/actual causes; $\boldsymbol{|\cause|}$: \emph{average} but-for/actual cause size; $\boldsymbol{t}$: runtime for computing but-for/actual causes}
    \label{tab:evaluation}
		\centering
		\def\arraystretch{1.1}
		\setlength\tabcolsep{3mm}
		\begin{booktabs}{cccccccccccc}
			\toprule
			\textbf{Instance} & $\boldsymbol{n}$ & $\boldsymbol{|\rho |}$ & $\boldsymbol{|\events_\rho|}$ & $\boldsymbol{\# \cause_{\mathit{BF}}}$ & $\boldsymbol{\# \cause_{\mathit{Act}}}$ &$\boldsymbol{|\cause_{\mathit{BF}}|}$ & $\boldsymbol{|\cause_{\mathit{Act}}|}$ & $\boldsymbol{t_{\mathit{BF}}}$ & $\boldsymbol{t_{\mathit{Act}}}$ \\
			\midrule
                \textsc{Run.\ Ex.\ } & 2 & \makecell{5 \\ 6} & \makecell{11 \\ 16} & \makecell{6 \\ 10} & \makecell{5 \\ 7} & \makecell{1.83 \\ 3.2} & \makecell{1.2 \\ 2} & \makecell{5.67s \\ 88.2s} & \makecell{5.42s \\ 128.8s} \\
                \midrule[dotted]
                \textsc{Run.\ Ex.\ } & 3 & \makecell{5 \\ 7} & \makecell{11 \\ 16} & \makecell{6 \\ 6} & \makecell{5 \\ 6} & \makecell{1.83 \\ 1.5} & \makecell{1.2 \\ 1.17} & \makecell{5.70s \\ 55.0s} & \makecell{5.53s \\ 78.6s} \\
                \midrule[dotted]
                \textsc{Run.\ Ex.\ } & 4 & 9 & 19 & 8 & 7 & 1.625 & 1.14 & 279.3s & 331.8s \\
                \midrule
			\textsc{Fischer} & 2 & \makecell{4 \\ 7}& \makecell{12 \\ 20}& \makecell{2 \\ 5}& \makecell{2 \\ 5}& \makecell{1 \\ 1.2}& \makecell{1 \\ 1.2} & \makecell{2.37s \\ 273.1s} & \makecell{9.73s \\ 1499s}\\
			\midrule[dotted]
			\textsc{Fischer} & 3 & \makecell{5 \\ 7} &  \makecell{14 \\ 20} &  \makecell{2 \\ 5}&  \makecell{2 \\ 5}&  \makecell{1 \\ 1.2}&  \makecell{1 \\ 1.2} &  \makecell{3.40s \\ 283.6s} &  \makecell{16.9s \\ 1516s} \\
   			\midrule[dotted]
			\textsc{Fischer} & 4 & \makecell{6 \\ 7} &  \makecell{16 \\ 20} &  \makecell{2 \\ 5}&  \makecell{2 \\ 5}&  \makecell{1 \\ 1.2}&  \makecell{1 \\ 1.2} &  \makecell{4.58s \\ 295.3s} &  \makecell{28.8s \\ 1535s} \\
			\bottomrule
		\end{booktabs}
\end{table}

\section{Experimental Evaluation}\label{sec:experiments}
We have implemented a prototype in Python.\footnote{Our prototype and benchmarks are available on GitHub~\cite{tool}.} For model checking networks of timed automata, we use \Uppaal~\cite{Uppaal} and the library \Pyuppaal~\cite{Pyuppaal}. Our tool can check and compute causes for effects in the fragment of MITL that is supported by the \Uppaal{} verification suite.
We conducted experiments on the running example of this paper, as well as on Fischer's protocol, a popular benchmark for real-time model checking. The experiments were run on a MacBook Pro with an Apple M3 Max and 64GB of memory. The results can be found in Table~\ref{tab:evaluation}. For the running example, the tool found exactly the causes depicted in Table \ref{tab:intro_table}; for Fischer's protocol, we report details in Appendix \ref{app:fisher}. The computed causes narrow down the responsible behavior on a given execution, with the average size of the causes between $1$ and $3.2$ on execution with a large number of events ($|\events_\rho|$). Using contingencies does result in smaller causes (cf.~$\boldsymbol{\mathit{Avg.} \, |\cause_{\mathit{BF}}|}$ vs.\ $|\boldsymbol{\cause_{\mathit{Act}}}|$) on the running example. This is not the case for Fischer's protocol, where but-for and actual causes are identical. These findings suggest some directions for optimization, since computing but-for causes is more efficient than computing actual causes. Since the latter are always subsets of the former, it may be sensible to first compute but-for causes and then refine them by taking into account contingencies. Further, the times in Table~\ref{tab:evaluation} refer to the time to compute \emph{all} causes. Hence, the performance in practical applications may be improved by iteratively presenting the user with (but-for or actual) causes that have already been found during the execution.

\section{Related Work}

Providing explanatory insight into \emph{why} a system does not satisfy a specification has been of growing interest in the verification community: Baier et al.~\cite{baier2021} provide a recent and detailed survey. Most works focus on discrete systems and perform error localization in executions~\cite{ball2003,groce2004,wang2006,jose2011} or by identifying responsible components~\cite{reiter1987,gossler2010,gossler2013,wang2013,gossler2020,assmann2021}. There are also several works on program slicing for analyzing  dependencies between different parts of a program \cite{weiser1984,horwitz1988,harman2001}. The concepts of vacuity and coverage can be used to gain causal insight also in the case of a successful verification~\cite{beer1997,ball2008,hoskote1999,chockler2008}. There are several recent works that take a state-based view of responsibility allocation in transition systems~\cite{DBLP:conf/aaai/BaierBK0P24,DBLP:conf/lics/MascleBFJK21}, but they do not consider infinite state systems where such an approach is not directly applicable. There are several works~\cite{gandalf23,coenen2022-1,finkbeiner2023} that use a notion of distance defined by similarity relations in the counterfactual tradition of Lewis~\cite{Lewis1973}. These are more closely related to our work since the minimality criterion in our definitions of but-for and actual causality can be interpreted as a similarity relation~\cite{finkbeiner2023}. Like this paper, a range of works has been inspired by Halpern and Pearl's actual causality for generating explanations~\cite{beer2012,datta2015,gossler2020,coenen2022,leitner2013,RafieioskoueiBonakdarpour24,BeutnerFFS23}. Our contingency automata constructions are particularly inspired by Coenen et al.~\cite{coenen2022,coenen2022-1}. There is a growing interest in counterfactual causality in models with infinitely many variables or infinite domains~\cite{finkbeiner2023,halpern2022}. In the latter work, Halpern and Peters provide an axiomatic account for counterfactual causes in such (structural equation) models, where variables are further allowed to have infinite ranges. Our results suggest that fragments of structural equation models related to networks of timed automata as studied here may be particularly amenable to cause computation. A correspondence between these modeling formalisms has already been pointed out by the same authors~\cite{PetersH21}, albeit to the even more expressive hybrid automata~\cite{DBLP:conf/hybrid/AlurCHH92} that subsume timed automata. For real-time systems, Dierks et al.\ develop an automated abstraction refinement technique~\cite{clarke2000} for timed automata based on considering causal relationships~\cite{dierks2007}. 
Wang et al.\ introduce a framework for the causal analysis of component-based real-time systems~\cite{wang2013}. Kölbl et al.\ follow a similar direction and propose a repairing technique of timed systems focusing on static clock bounds~\cite{kolbl2019}. In a further contribution, they consider the delay values of timed systems to compute causal delay values and ranges in traces violating reachability properties~\cite{kolbl2020}. Mari et al.\ propose an explanation technique for the violation of safety properties in real-time systems \cite{MariDG21}, their approach is based on their corresponding work on explaining discrete systems~\cite{gossler2019}. Hence, in the domain of real-time systems ours is the first technique to consider arbitrary MITL properties, i.e., safety \emph{and} liveness, as effects, together with both actions \emph{and} real-time delays as causes.

\section{Conclusion}
Based on the seminal works of Halpern and Pearl, we have proposed notions of but-for and actual causality for networks of timed automata, which define counterfactual explanations for violations of MITL specifications. Our definitions rely on the idea of counterfactual automata that represent infinitely many possible counterfactual executions. We then leveraged results on real-time model checking for algorithms that check and compute but-for and counterfactual causes, demonstrating with a prototype that our explanations significantly narrow down the root causes in counterexamples of MITL properties. Interesting directions of future work are to study symbolic causes~\cite{leitner2013,coenen2022-1,FinkbeinerFMS24} in real-time system, i.e., to consider real-time properties specified in MITL or an event-based logic as causes~\cite{leitner2013,caltais2019}, and to develop tools for visualizing~\cite{DBLP:journals/tvcg/HorakCMHFMDFD22} counterfactual explanations in networks of timed automata.

\bibliography{bibliography}

\begin{thebibliography}{10}

\bibitem{alur1999}
Rajeev Alur.
\newblock Timed automata.
\newblock In Nicolas Halbwachs and Doron~A. Peled, editors, {\em Computer Aided Verification, 11th International Conference, {CAV} '99, Trento, Italy, July 6-10, 1999, Proceedings}, volume 1633 of {\em Lecture Notes in Computer Science}, pages 8--22. Springer, 1999.
\newblock \href {https://doi.org/10.1007/3-540-48683-6\_3} {\path{doi:10.1007/3-540-48683-6\_3}}.

\bibitem{DBLP:conf/hybrid/AlurCHH92}
Rajeev Alur, Costas Courcoubetis, Thomas~A. Henzinger, and Pei{-}Hsin Ho.
\newblock Hybrid automata: An algorithmic approach to the specification and verification of hybrid systems.
\newblock In Robert~L. Grossman, Anil Nerode, Anders~P. Ravn, and Hans Rischel, editors, {\em Hybrid Systems}, volume 736 of {\em Lecture Notes in Computer Science}, pages 209--229. Springer, 1992.
\newblock \href {https://doi.org/10.1007/3-540-57318-6\_30} {\path{doi:10.1007/3-540-57318-6\_30}}.

\bibitem{AlurD94}
Rajeev Alur and David~L. Dill.
\newblock A theory of timed automata.
\newblock {\em Theor. Comput. Sci.}, 126(2):183--235, 1994.
\newblock \href {https://doi.org/10.1016/0304-3975(94)90010-8} {\path{doi:10.1016/0304-3975(94)90010-8}}.

\bibitem{alur1996}
Rajeev Alur, Tom\'{a}s Feder, and Thomas~A. Henzinger.
\newblock The benefits of relaxing punctuality.
\newblock {\em J. ACM}, 43(1):116–146, jan 1996.
\newblock \href {https://doi.org/10.1145/227595.227602} {\path{doi:10.1145/227595.227602}}.

\bibitem{assmann2021}
Uwe Aßmann, Christel Baier, Clemens Dubslaff, Dominik Grzelak, Simon Hanisch, Ardhi Putra~Pratama Hartono, Stefan Köpsell, Tianfang Lin, and Thorsten Strufe.
\newblock {\em Tactile computing: Essential building blocks for the Tactile Internet}, page 293–317.
\newblock {Academic Press}, 2021.
\newblock 46.23.01; LK 01.
\newblock \href {https://doi.org/10.1016/B978-0-12-821343-8.00025-3} {\path{doi:10.1016/B978-0-12-821343-8.00025-3}}.

\bibitem{baier2021}
Christel Baier, Clemens Dubslaff, Florian Funke, Simon Jantsch, Rupak Majumdar, Jakob Piribauer, and Robin Ziemek.
\newblock From verification to causality-based explications (invited talk).
\newblock In Nikhil Bansal, Emanuela Merelli, and James Worrell, editors, {\em 48th International Colloquium on Automata, Languages, and Programming, {ICALP} 2021, July 12-16, 2021, Glasgow, Scotland (Virtual Conference)}, volume 198 of {\em LIPIcs}, pages 1:1--1:20. Schloss Dagstuhl - Leibniz-Zentrum f{\"{u}}r Informatik, 2021.
\newblock URL: \url{https://doi.org/10.4230/LIPIcs.ICALP.2021.1}, \href {https://doi.org/10.4230/LIPICS.ICALP.2021.1} {\path{doi:10.4230/LIPICS.ICALP.2021.1}}.

\bibitem{DBLP:conf/aaai/BaierBK0P24}
Christel Baier, Roxane van~den Bossche, Sascha Kl{\"{u}}ppelholz, Johannes Lehmann, and Jakob Piribauer.
\newblock Backward responsibility in transition systems using general power indices.
\newblock In Michael~J. Wooldridge, Jennifer~G. Dy, and Sriraam Natarajan, editors, {\em Thirty-Eighth {AAAI} Conference on Artificial Intelligence, {AAAI} 2024, Thirty-Sixth Conference on Innovative Applications of Artificial Intelligence, {IAAI} 2024, Fourteenth Symposium on Educational Advances in Artificial Intelligence, {EAAI} 2014, February 20-27, 2024, Vancouver, Canada}, pages 20320--20327. {AAAI} Press, 2024.
\newblock URL: \url{https://doi.org/10.1609/aaai.v38i18.30013}, \href {https://doi.org/10.1609/AAAI.V38I18.30013} {\path{doi:10.1609/AAAI.V38I18.30013}}.

\bibitem{ball2008}
Thomas Ball and Orna Kupferman.
\newblock Vacuity in testing.
\newblock In Bernhard Beckert and Reiner H{\"a}hnle, editors, {\em Tests and Proofs}, pages 4--17, Berlin, Heidelberg, 2008. Springer Berlin Heidelberg.

\bibitem{ball2003}
Thomas Ball, Mayur Naik, and Sriram~K. Rajamani.
\newblock From symptom to cause: Localizing errors in counterexample traces.
\newblock In {\em Proceedings of the 30th ACM SIGPLAN-SIGACT Symposium on Principles of Programming Languages}, POPL '03, page 97–105, New York, NY, USA, 2003. Association for Computing Machinery.
\newblock \href {https://doi.org/10.1145/604131.604140} {\path{doi:10.1145/604131.604140}}.

\bibitem{beer2012}
Ilan Beer, Shoham Ben-David, Hana Chockler, Avigail Orni, and Richard Trefler.
\newblock Explaining counterexamples using causality.
\newblock In Ahmed Bouajjani and Oded Maler, editors, {\em Computer Aided Verification}, pages 94--108, Berlin, Heidelberg, 2009. Springer Berlin Heidelberg.

\bibitem{beer1997}
Ilan Beer, Shoham Ben-David, Cindy Eisner, and Yoav Rodeh.
\newblock Efficient detection of vacuity in actl formulas.
\newblock In Orna Grumberg, editor, {\em Computer Aided Verification}, pages 279--290, Berlin, Heidelberg, 1997. Springer Berlin Heidelberg.

\bibitem{DBLP:conf/concur/BengtssonJLY98}
Johan Bengtsson, Bengt Jonsson, Johan Lilius, and Wang Yi.
\newblock Partial order reductions for timed systems.
\newblock In Davide Sangiorgi and Robert de~Simone, editors, {\em {CONCUR} '98: Concurrency Theory, 9th International Conference, Nice, France, September 8-11, 1998, Proceedings}, volume 1466 of {\em Lecture Notes in Computer Science}, pages 485--500. Springer, 1998.
\newblock URL: \url{https://doi.org/10.1007/BFb0055643}, \href {https://doi.org/10.1007/BFB0055643} {\path{doi:10.1007/BFB0055643}}.

\bibitem{Uppaal}
Johan Bengtsson, Kim~Guldstrand Larsen, Fredrik Larsson, Paul Pettersson, and Wang Yi.
\newblock {UPPAAL} - a tool suite for automatic verification of real-time systems.
\newblock In Rajeev Alur, Thomas~A. Henzinger, and Eduardo~D. Sontag, editors, {\em Hybrid Systems {III:} Verification and Control, Proceedings of the {DIMACS/SYCON} Workshop on Verification and Control of Hybrid Systems, October 22-25, 1995, Ruttgers University, New Brunswick, NJ, {USA}}, volume 1066 of {\em Lecture Notes in Computer Science}, pages 232--243. Springer, 1995.
\newblock URL: \url{https://doi.org/10.1007/BFb0020949}, \href {https://doi.org/10.1007/BFB0020949} {\path{doi:10.1007/BFB0020949}}.

\bibitem{Bengtsson2003}
Johan Bengtsson and Wang Yi.
\newblock Timed automata: Semantics, algorithms and tools.
\newblock In J{\"{o}}rg Desel, Wolfgang Reisig, and Grzegorz Rozenberg, editors, {\em Lectures on Concurrency and Petri Nets, Advances in Petri Nets [This tutorial volume originates from the 4th Advanced Course on Petri Nets, {ACPN} 2003, held in Eichst{\"{a}}tt, Germany in September 2003. In addition to lectures given at {ACPN} 2003, additional chapters have been commissioned]}, volume 3098 of {\em Lecture Notes in Computer Science}, pages 87--124. Springer, 2003.
\newblock \href {https://doi.org/10.1007/978-3-540-27755-2\_3} {\path{doi:10.1007/978-3-540-27755-2\_3}}.

\bibitem{BeutnerFFS23}
Raven Beutner, Bernd Finkbeiner, Hadar Frenkel, and Julian Siber.
\newblock Checking and sketching causes on temporal sequences.
\newblock In {\'{E}}tienne Andr{\'{e}} and Jun Sun, editors, {\em Automated Technology for Verification and Analysis - 21st International Symposium, {ATVA} 2023, Singapore, October 24-27, 2023, Proceedings, Part {II}}, volume 14216 of {\em Lecture Notes in Computer Science}, pages 314--327. Springer, 2023.
\newblock \href {https://doi.org/10.1007/978-3-031-45332-8\_18} {\path{doi:10.1007/978-3-031-45332-8\_18}}.

\bibitem{DBLP:journals/tcs/BouyerDFP04}
Patricia Bouyer, Catherine Dufourd, Emmanuel Fleury, and Antoine Petit.
\newblock Updatable timed automata.
\newblock {\em Theor. Comput. Sci.}, 321(2-3):291--345, 2004.
\newblock URL: \url{https://doi.org/10.1016/j.tcs.2004.04.003}, \href {https://doi.org/10.1016/J.TCS.2004.04.003} {\path{doi:10.1016/J.TCS.2004.04.003}}.

\bibitem{caltais2019}
Georgiana Caltais, Sophie~Linnea Guetlein, and Stefan Leue.
\newblock Causality for general {LTL}-definable properties.
\newblock {\em Electronic Proceedings in Theoretical Computer Science}, 286:1--15, jan 2019.
\newblock URL: \url{https://doi.org/10.4204%2Feptcs.286.1}, \href {https://doi.org/10.4204/eptcs.286.1} {\path{doi:10.4204/eptcs.286.1}}.

\bibitem{chockler2008}
Hana Chockler, Joseph~Y. Halpern, and Orna Kupferman.
\newblock What causes a system to satisfy a specification?
\newblock {\em ACM Trans. Comput. Logic}, 9(3), jun 2008.
\newblock \href {https://doi.org/10.1145/1352582.1352588} {\path{doi:10.1145/1352582.1352588}}.

\bibitem{clarke2000}
Edmund Clarke, Orna Grumberg, Somesh Jha, Yuan Lu, and Helmut Veith.
\newblock Counterexample-guided abstraction refinement.
\newblock In E.~Allen Emerson and Aravinda~Prasad Sistla, editors, {\em Computer Aided Verification}, pages 154--169, Berlin, Heidelberg, 2000. Springer Berlin Heidelberg.

\bibitem{coenen2022}
Norine Coenen, Raimund Dachselt, Bernd Finkbeiner, Hadar Frenkel, Christopher Hahn, Tom Horak, Niklas Metzger, and Julian Siber.
\newblock Explaining hyperproperty violations.
\newblock In Sharon Shoham and Yakir Vizel, editors, {\em Computer Aided Verification - 34th International Conference, {CAV} 2022, Haifa, Israel, August 7-10, 2022, Proceedings, Part {I}}, volume 13371 of {\em Lecture Notes in Computer Science}, pages 407--429. Springer, 2022.
\newblock \href {https://doi.org/10.1007/978-3-031-13185-1\_20} {\path{doi:10.1007/978-3-031-13185-1\_20}}.

\bibitem{coenen2022-1}
Norine Coenen, Bernd Finkbeiner, Hadar Frenkel, Christopher Hahn, Niklas Metzger, and Julian Siber.
\newblock Temporal causality in reactive systems.
\newblock In Ahmed Bouajjani, Luk{\'{a}}s Hol{\'{\i}}k, and Zhilin Wu, editors, {\em Automated Technology for Verification and Analysis - 20th International Symposium, {ATVA} 2022, Virtual Event, October 25-28, 2022, Proceedings}, volume 13505 of {\em Lecture Notes in Computer Science}, pages 208--224. Springer, 2022.
\newblock \href {https://doi.org/10.1007/978-3-031-19992-9\_13} {\path{doi:10.1007/978-3-031-19992-9\_13}}.

\bibitem{datta2015}
Anupam Datta, Deepak Garg, Dilsun Kaynar, Divya Sharma, and Arunesh Sinha.
\newblock Program actions as actual causes: A building block for accountability.
\newblock In {\em 2015 IEEE 28th Computer Security Foundations Symposium}, pages 261--275, 2015.
\newblock \href {https://doi.org/10.1109/CSF.2015.25} {\path{doi:10.1109/CSF.2015.25}}.

\bibitem{DavidY00}
Alexandre David and Wang Yi.
\newblock Modelling and analysis of a commercial field bus protocol.
\newblock In {\em 12th Euromicro Conference on Real-Time Systems {(ECRTS} 2000), 19-21 June 2000, Stockholm, Sweden, Proceedings}, pages 165--172. {IEEE} Computer Society, 2000.
\newblock \href {https://doi.org/10.1109/EMRTS.2000.854004} {\path{doi:10.1109/EMRTS.2000.854004}}.

\bibitem{dierks2007}
Henning Dierks, Sebastian Kupferschmid, and Kim~Guldstrand Larsen.
\newblock Automatic abstraction refinement for timed automata.
\newblock In Jean{-}Fran{\c{c}}ois Raskin and P.~S. Thiagarajan, editors, {\em Formal Modeling and Analysis of Timed Systems, 5th International Conference, {FORMATS} 2007, Salzburg, Austria, October 3-5, 2007, Proceedings}, volume 4763 of {\em Lecture Notes in Computer Science}, pages 114--129. Springer, 2007.
\newblock \href {https://doi.org/10.1007/978-3-540-75454-1\_10} {\path{doi:10.1007/978-3-540-75454-1\_10}}.

\bibitem{FinkbeinerFMS24}
Bernd Finkbeiner, Hadar Frenkel, Niklas Metzger, and Julian Siber.
\newblock Synthesis of temporal causality.
\newblock In Arie Gurfinkel and Vijay Ganesh, editors, {\em Computer Aided Verification - 36th International Conference, {CAV} 2024, Montreal, QC, Canada, July 24-27, 2024, Proceedings, Part {III}}, volume 14683 of {\em Lecture Notes in Computer Science}, pages 87--111. Springer, 2024.
\newblock \href {https://doi.org/10.1007/978-3-031-65633-0\_5} {\path{doi:10.1007/978-3-031-65633-0\_5}}.

\bibitem{finkbeiner2023}
Bernd Finkbeiner and Julian Siber.
\newblock Counterfactuals modulo temporal logics.
\newblock In Ruzica Piskac and Andrei Voronkov, editors, {\em {LPAR} 2023: 24th International Conference on Logic for Programming, Artificial Intelligence and Reasoning, Manizales, Colombia, June 4-9, 2023}, volume~94 of {\em EPiC Series in Computing}, pages 181--204. EasyChair, 2023.
\newblock URL: \url{https://easychair.org/publications/paper/sWZw}, \href {https://doi.org/10.29007/qtw7} {\path{doi:10.29007/qtw7}}.

\bibitem{GerkeEFP10}
Michael Gerke, R{\"{u}}diger Ehlers, Bernd Finkbeiner, and Hans{-}J{\"{o}}rg Peter.
\newblock Model checking the flexray physical layer protocol.
\newblock In Stefan Kowalewski and Marco Roveri, editors, {\em Formal Methods for Industrial Critical Systems - 15th International Workshop, {FMICS} 2010, Antwerp, Belgium, September 20-21, 2010. Proceedings}, volume 6371 of {\em Lecture Notes in Computer Science}, pages 132--147. Springer, 2010.
\newblock \href {https://doi.org/10.1007/978-3-642-15898-8\_9} {\path{doi:10.1007/978-3-642-15898-8\_9}}.

\bibitem{gossler2013}
Gregor G{\"o}ssler and Daniel Le~M{\'e}tayer.
\newblock {A General Trace-Based Framework of Logical Causality}.
\newblock Research Report RR-8378, {INRIA}, October 2013.
\newblock URL: \url{https://inria.hal.science/hal-00873665}.

\bibitem{gossler2010}
Gregor G{\"o}ssler, Daniel Le~M{\'e}tayer, and Jean-Baptiste Raclet.
\newblock Causality analysis in contract violation.
\newblock In Howard Barringer, Ylies Falcone, Bernd Finkbeiner, Klaus Havelund, Insup Lee, Gordon Pace, Grigore Ro{\c{s}}u, Oleg Sokolsky, and Nikolai Tillmann, editors, {\em Runtime Verification}, pages 270--284, Berlin, Heidelberg, 2010. Springer Berlin Heidelberg.

\bibitem{gossler2019}
Gregor G{\"o}ssler, Thomas Mari, Yannick Pencol{\'e}, and Louise Trav{\'e}-Massuy{\`e}s.
\newblock {Towards Causal Explanations of Property Violations in Discrete Event Systems}.
\newblock In {\em {DX'19 - 30th International Workshop on Principles of Diagnosis}}, pages 1--8, Klagenfurt, Austria, November 2019.
\newblock URL: \url{https://inria.hal.science/hal-02369014}.

\bibitem{groce2004}
Alex Groce.
\newblock Error explanation with distance metrics.
\newblock In Kurt Jensen and Andreas Podelski, editors, {\em Tools and Algorithms for the Construction and Analysis of Systems}, pages 108--122, Berlin, Heidelberg, 2004. Springer Berlin Heidelberg.

\bibitem{gossler2020}
Gregor Gössler and Jean-Bernard Stefani.
\newblock Causality analysis and fault ascription in component-based systems.
\newblock {\em Theoretical Computer Science}, 837:158--180, 2020.
\newblock URL: \url{https://www.sciencedirect.com/science/article/pii/S0304397520303510}, \href {https://doi.org/10.1016/j.tcs.2020.06.010} {\path{doi:10.1016/j.tcs.2020.06.010}}.

\bibitem{halpern2015}
Joseph~Y. Halpern.
\newblock A modification of the halpern-pearl definition of causality.
\newblock In Qiang Yang and Michael~J. Wooldridge, editors, {\em Proceedings of the Twenty-Fourth International Joint Conference on Artificial Intelligence, {IJCAI} 2015, Buenos Aires, Argentina, July 25-31, 2015}, pages 3022--3033. {AAAI} Press, 2015.
\newblock URL: \url{http://ijcai.org/Abstract/15/427}.

\bibitem{halpern2016}
Joseph~Y. Halpern.
\newblock {\em Actual Causality}.
\newblock MIT Press, 2016.

\bibitem{HalpernPearl05a}
Joseph~Y. Halpern and Judea Pearl.
\newblock Causes and explanations: A structural-model approach. part i: Causes.
\newblock {\em The British Journal for the Philosophy of Science}, 2005.

\bibitem{HalpernPearl05b}
Joseph~Y. Halpern and Judea Pearl.
\newblock Causes and explanations: A structural-model approach. part ii: Explanations.
\newblock {\em The British Journal for the Philosophy of Science}, 2005.

\bibitem{halpern2022}
Joseph~Y. Halpern and Spencer Peters.
\newblock Reasoning about causal models with infinitely many variables.
\newblock {\em Proceedings of the AAAI Conference on Artificial Intelligence}, 36(5):5668--5675, Jun. 2022.
\newblock URL: \url{https://ojs.aaai.org/index.php/AAAI/article/view/20508}, \href {https://doi.org/10.1609/aaai.v36i5.20508} {\path{doi:10.1609/aaai.v36i5.20508}}.

\bibitem{harman2001}
Mark Harman and Robert~M. Hierons.
\newblock An overview of program slicing.
\newblock {\em Softw. Focus}, 2(3):85--92, 2001.
\newblock URL: \url{https://doi.org/10.1002/swf.41}, \href {https://doi.org/10.1002/SWF.41} {\path{doi:10.1002/SWF.41}}.

\bibitem{Havelund97}
K.~Havelund, A.~Skou, K.G. Larsen, and K.~Lund.
\newblock Formal modeling and analysis of an audio/video protocol: an industrial case study using uppaal.
\newblock In {\em Proceedings Real-Time Systems Symposium}, pages 2--13, 1997.
\newblock \href {https://doi.org/10.1109/REAL.1997.641264} {\path{doi:10.1109/REAL.1997.641264}}.

\bibitem{DBLP:journals/tvcg/HorakCMHFMDFD22}
Tom Horak, Norine Coenen, Niklas Metzger, Christopher Hahn, Tamara Flemisch, Juli{\'{a}}n M{\'{e}}ndez, Dennis Dimov, Bernd Finkbeiner, and Raimund Dachselt.
\newblock Visual analysis of hyperproperties for understanding model checking results.
\newblock {\em {IEEE} Trans. Vis. Comput. Graph.}, 28(1):357--367, 2022.
\newblock \href {https://doi.org/10.1109/TVCG.2021.3114866} {\path{doi:10.1109/TVCG.2021.3114866}}.

\bibitem{horwitz1988}
Susan~B. Horwitz, Thomas Reps, and Dave Binkley.
\newblock Interprocedural slicing using dependence graphs.
\newblock {\em SIGPLAN Not.}, 23(7):35–46, jun 1988.
\newblock \href {https://doi.org/10.1145/960116.53994} {\path{doi:10.1145/960116.53994}}.

\bibitem{hoskote1999}
Yatin~Vasant Hoskote, Timothy Kam, Pei{-}Hsin Ho, and Xudong Zhao.
\newblock Coverage estimation for symbolic model checking.
\newblock In Mary~Jane Irwin, editor, {\em Proceedings of the 36th Conference on Design Automation, New Orleans, LA, USA, June 21-25, 1999}, pages 300--305. {ACM} Press, 1999.
\newblock \href {https://doi.org/10.1145/309847.309936} {\path{doi:10.1145/309847.309936}}.

\bibitem{tool}
Felix Jahn.
\newblock Prototype tool of our approach.
\newblock URL: \url{https://github.com/FelixJahnFJ/Real-Time-Causality-Tool.git}.

\bibitem{jose2011}
Manu Jose and Rupak Majumdar.
\newblock Cause clue clauses: error localization using maximum satisfiability.
\newblock In Mary~W. Hall and David~A. Padua, editors, {\em Proceedings of the 32nd {ACM} {SIGPLAN} Conference on Programming Language Design and Implementation, {PLDI} 2011, San Jose, CA, USA, June 4-8, 2011}, pages 437--446. {ACM}, 2011.
\newblock \href {https://doi.org/10.1145/1993498.1993550} {\path{doi:10.1145/1993498.1993550}}.

\bibitem{kolbl2020}
Martin K{\"{o}}lbl, Stefan Leue, and Robert Schmid.
\newblock Dynamic causes for the violation of timed reachability properties.
\newblock In Nathalie Bertrand and Nils Jansen, editors, {\em Formal Modeling and Analysis of Timed Systems - 18th International Conference, {FORMATS} 2020, Vienna, Austria, September 1-3, 2020, Proceedings}, volume 12288 of {\em Lecture Notes in Computer Science}, pages 127--143. Springer, 2020.
\newblock \href {https://doi.org/10.1007/978-3-030-57628-8\_8} {\path{doi:10.1007/978-3-030-57628-8\_8}}.

\bibitem{kolbl2019}
Martin K{\"{o}}lbl, Stefan Leue, and Thomas Wies.
\newblock Clock bound repair for timed systems.
\newblock In Isil Dillig and Serdar Tasiran, editors, {\em Computer Aided Verification - 31st International Conference, {CAV} 2019, New York City, NY, USA, July 15-18, 2019, Proceedings, Part {I}}, volume 11561 of {\em Lecture Notes in Computer Science}, pages 79--96. Springer, 2019.
\newblock \href {https://doi.org/10.1007/978-3-030-25540-4\_5} {\path{doi:10.1007/978-3-030-25540-4\_5}}.

\bibitem{leitner2013}
Florian Leitner{-}Fischer and Stefan Leue.
\newblock Causality checking for complex system models.
\newblock In Roberto Giacobazzi, Josh Berdine, and Isabella Mastroeni, editors, {\em Verification, Model Checking, and Abstract Interpretation, 14th International Conference, {VMCAI} 2013, Rome, Italy, January 20-22, 2013. Proceedings}, volume 7737 of {\em Lecture Notes in Computer Science}, pages 248--267. Springer, 2013.
\newblock \href {https://doi.org/10.1007/978-3-642-35873-9\_16} {\path{doi:10.1007/978-3-642-35873-9\_16}}.

\bibitem{Lewis1973}
David~K. Lewis.
\newblock {\em Counterfactuals}.
\newblock Cambridge, MA, USA: Blackwell, 1973.

\bibitem{LindahlPY98}
Magnus Lindahl, Paul Pettersson, and Wang Yi.
\newblock Formal design and analysis of a gear controller.
\newblock In Bernhard Steffen, editor, {\em Tools and Algorithms for Construction and Analysis of Systems, 4th International Conference, {TACAS} '98, Held as Part of the European Joint Conferences on the Theory and Practice of Software, ETAPS'98, Lisbon, Portugal, March 28 - April 4, 1998, Proceedings}, volume 1384 of {\em Lecture Notes in Computer Science}, pages 281--297. Springer, 1998.
\newblock URL: \url{https://doi.org/10.1007/BFb0054178}, \href {https://doi.org/10.1007/BFB0054178} {\path{doi:10.1007/BFB0054178}}.

\bibitem{MalerNP06}
Oded Maler, Dejan Nickovic, and Amir Pnueli.
\newblock From {MITL} to timed automata.
\newblock In Eugene Asarin and Patricia Bouyer, editors, {\em Formal Modeling and Analysis of Timed Systems, 4th International Conference, {FORMATS} 2006, Paris, France, September 25-27, 2006, Proceedings}, volume 4202 of {\em Lecture Notes in Computer Science}, pages 274--289. Springer, 2006.
\newblock \href {https://doi.org/10.1007/11867340\_20} {\path{doi:10.1007/11867340\_20}}.

\bibitem{MariDG21}
Thomas Marix, Thao Dang, and Gregor G{\"{o}}ssler.
\newblock Explaining safety violations in real-time systems.
\newblock In Catalin Dima and Mahsa Shirmohammadi, editors, {\em Formal Modeling and Analysis of Timed Systems - 19th International Conference, {FORMATS} 2021, Paris, France, August 24-26, 2021, Proceedings}, volume 12860 of {\em Lecture Notes in Computer Science}, pages 100--116. Springer, 2021.
\newblock \href {https://doi.org/10.1007/978-3-030-85037-1\_7} {\path{doi:10.1007/978-3-030-85037-1\_7}}.

\bibitem{DBLP:conf/lics/MascleBFJK21}
Corto Mascle, Christel Baier, Florian Funke, Simon Jantsch, and Stefan Kiefer.
\newblock Responsibility and verification: Importance value in temporal logics.
\newblock In {\em 36th Annual {ACM/IEEE} Symposium on Logic in Computer Science, {LICS} 2021, Rome, Italy, June 29 - July 2, 2021}, pages 1--14. {IEEE}, 2021.
\newblock \href {https://doi.org/10.1109/LICS52264.2021.9470597} {\path{doi:10.1109/LICS52264.2021.9470597}}.

\bibitem{gandalf23}
Julie Parreaux, Jakob Piribauer, and Christel Baier.
\newblock Counterfactual causality for reachability and safety based on distance functions.
\newblock In Antonis Achilleos and Dario~Della Monica, editors, {\em Proceedings of the Fourteenth International Symposium on Games, Automata, Logics, and Formal Verification, GandALF 2023, Udine, Italy, 18-20th September 2023}, volume 390 of {\em {EPTCS}}, pages 132--149, 2023.
\newblock \href {https://doi.org/10.4204/EPTCS.390.9} {\path{doi:10.4204/EPTCS.390.9}}.

\bibitem{PetersH21}
Spencer Peters and Joseph~Y. Halpern.
\newblock Causal modeling with infinitely many variables.
\newblock {\em CoRR}, abs/2112.09171, 2021.
\newblock URL: \url{https://arxiv.org/abs/2112.09171}, \href {https://arxiv.org/abs/2112.09171} {\path{arXiv:2112.09171}}.

\bibitem{Pyuppaal}
Pyuppaal library webpage.
\newblock URL: \url{https://pypi.org/project/pyuppaal/1.0.0/}.

\bibitem{RafieioskoueiBonakdarpour24}
Arshia Rafieioskouei and Borzoo Bonakdarpour.
\newblock Efficient discovery of actual causality using abstraction-refinement.
\newblock {\em CoRR}, abs/2407.16629, 2024.
\newblock URL: \url{https://doi.org/10.48550/arXiv.2407.16629}, \href {https://arxiv.org/abs/2407.16629} {\path{arXiv:2407.16629}}, \href {https://doi.org/10.48550/ARXIV.2407.16629} {\path{doi:10.48550/ARXIV.2407.16629}}.

\bibitem{reiter1987}
Raymond Reiter.
\newblock A theory of diagnosis from first principles.
\newblock {\em Artificial Intelligence}, 32(1):57--95, 1987.
\newblock URL: \url{https://www.sciencedirect.com/science/article/pii/0004370287900622}, \href {https://doi.org/10.1016/0004-3702(87)90062-2} {\path{doi:10.1016/0004-3702(87)90062-2}}.

\bibitem{wang2006}
Chao Wang, Zijiang Yang, Franjo Ivan{\v{c}}i{\'{c}}, and Aarti Gupta.
\newblock Whodunit? causal analysis for counterexamples.
\newblock In Susanne Graf and Wenhui Zhang, editors, {\em Automated Technology for Verification and Analysis}, pages 82--95, Berlin, Heidelberg, 2006. Springer Berlin Heidelberg.

\bibitem{wang2013}
Shaohui Wang, Anaheed Ayoub, BaekGyu Kim, Gregor G{\"o}ssler, Oleg Sokolsky, and Insup Lee.
\newblock A causality analysis framework for component-based real-time systems.
\newblock In Axel Legay and Saddek Bensalem, editors, {\em Runtime Verification}, pages 285--303, Berlin, Heidelberg, 2013. Springer Berlin Heidelberg.

\bibitem{weiser1984}
Mark Weiser.
\newblock Program slicing.
\newblock {\em IEEE Transactions on Software Engineering}, SE-10(4):352--357, 1984.
\newblock \href {https://doi.org/10.1109/TSE.1984.5010248} {\path{doi:10.1109/TSE.1984.5010248}}.

\end{thebibliography}
\newpage
\appendix

\section{Proofs of Section 3}\label{sec:proofs3}

\setcounter{proposition}{13}
\begin{proposition}
	Given an effect $\phi$ and a network of timed automata $\TA^n = \TA_1 \; || \ldots || \; \TA_n $, then for every run $\rho$ of the network in which $\phi$ appears, there is a but-for cause for $\phi$ in $\rho$ of \,$\TA^n$, if and only if there exists a run $\rho'$ of the network with $\rho' \not\models \phi$. 
\end{proposition}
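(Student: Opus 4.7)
The plan is to prove the two directions of the equivalence separately.

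For the forward direction ($\Rightarrow$), suppose a but-for cause $\cause$ for $\phi$ in $\rho$ of $\TA^n$ exists. By condition \CFBF, the counterfactual network $(\TA_1 \inter \TA^{\cause|_1}_{\rho(\TA_1)}) \; || \ldots || \; (\TA_n \inter \TA^{\cause|_n}_{\rho(\TA_n)})$ admits a run $\rho^\star$ with $\rho^\star \not\models \phi$. The crucial observation is that intersection on actions only refines the edge relation of each component $\TA_i$ but never introduces new edges, so every run of this counterfactual network is also a run of the original network $\TA^n$. Hence $\rho' := \rho^\star$ witnesses the right-hand side.

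For the backward direction ($\Leftarrow$), suppose $\rho' \not\models \phi$ is a run of $\TA^n$ and fix an arbitrary $\rho \models \phi$. My candidate is $\cause := \events_\rho$. \textbf{SAT} is immediate since $\cause \subseteq \events_\rho$ and $\rho \models \phi$ by hypothesis. For \CFBF, when all events are in the cause every transition of $\TA^{\cause|_i}_{\rho(\TA_i)}$ allows an arbitrary action and any delay with invariant $\top$, so the intersected network is maximally permissive, modulo the fixed lasso structure inherited from $\rho$. Once \CFBF is established for $\cause = \events_\rho$, a minimal subset satisfying both \textbf{SAT} and \CFBF (and hence \textbf{MIN}) is obtained by iteratively removing events while preserving \CFBF, which is guaranteed by monotonicity (Lemma~\ref{lemma:monotonicity}).

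The main obstacle lies in rigorously showing that the maximally permissive counterfactual network above contains some run avoiding $\phi$. The delicacy is that $\rho'$ may have local event counts per component that differ from those of $\rho$, in which case $\rho'$ itself does not lie in the counterfactual network. I would address this via a pumping-style argument on the region or zone abstraction of $\TA^n$~\cite{AlurD94,Bengtsson2003}: starting from $\rho'$, one can unroll its loop (or $\rho$'s loop) and realign segments to produce a run whose local lasso structure matches that of $\rho$ without altering the MITL semantics of $\phi$, since MITL cannot distinguish delay sequences in the same region equivalence class. The resulting run lies in the counterfactual network and still violates $\phi$, thereby establishing \CFBF and completing the proof.
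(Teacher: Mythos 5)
Your forward direction is exactly the paper's argument (runs of the intersected counterfactual network form a subset of the runs of $\TA^n$), and your choice of candidate $\cause = \events_\rho$ for the backward direction, followed by extracting a minimal subset by finiteness, is also the paper's route. The problem is your final paragraph: the ``main obstacle'' you identify is not actually an obstacle, and the fix you sketch for it would not go through as stated. When every event of $\rho(\TA_i)$ is placed in the cause, the counterfactual trace automaton $\TA^{\events_\rho|_i}_{\rho(\TA_i)}$ has, at \emph{every} location, an outgoing edge for \emph{every} action in $\Act$ with guard $\top$, target the next position, and invariant $\top$. The lasso-shaped location structure therefore only determines how a position counter wraps around; it places no constraint whatsoever on which delays or actions can occur, or on how many actions a component performs before or inside any loop. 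Hence the language of each such trace automaton is the set of \emph{all} traces, the intersection with $\TA_i$ has exactly the runs of $\TA_i$, and the counterfactual network coincides with $\TA^n$ --- so $\rho'$ itself already lies in it. This is precisely the observation the paper makes, and it renders your realignment step unnecessary.

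Beyond being unnecessary, the pumping/realignment argument is not sound as sketched: unrolling a loop of $\rho'$ or $\rho$ and ``realigning segments'' changes the signal $\sigma^{\rho'}$ (which atomic propositions hold at which absolute times), and MITL with interval-constrained operators is sensitive to exactly this; region equivalence of clock valuations of $\TA^n$ does not imply indistinguishability under an arbitrary MITL formula $\phi$, whose constants need not be related to the automaton's guards. Had the obstacle been real, this step would need a substantially more careful argument. Finally, a small wording issue: monotonicity (Lemma~\ref{lemma:monotonicity}) does not guarantee that you can keep removing events while preserving \CFBF{}; what you actually need, and what the paper uses, is simply that $\events_\rho$ is finite, so an inclusion-minimal subset satisfying \textbf{SAT} and \CFBF{} exists and satisfies \textbf{MIN} by definition.
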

\begin{proof}
	Let $\rho$ be a run of such a network with $\rho \models \phi$. We show both direction of the equivalence separately.\\
 
    ``$\Rightarrow\!\!"$: Assume there is a but-for cause $\cause$ for $\phi$ in $\rho$ of $\TA$. From \CFBF, we know that there exists a run $\rho' \in \Pi\big((\TA_1 \inter \TA^{\cause|_1}_{\rho(\TA_1)}) \; || \ldots || \; (\TA_n \inter \TA^{\cause|_n}_{\rho(\TA_n)})\big)$ such that $\rho' \not\models \phi$. Since the components of the network are built from (trace) intersections, is easy to see that $\Pi(\TA_i \inter \TA^\cause_{\rho(\TA_i)}) \subseteq \Pi(\TA_i)$ for all components $ 1 \leq i \leq n$. From the semantics of the network based on parallel composition, it follows that $\Pi\big((\TA_1 \inter \TA^{\cause|_1}_{\rho(\TA_1)}) \; || \ldots || \; (\TA_n \inter \TA^{\cause|_n}_{\rho(\TA_n)})\big) \subseteq \Pi(\TA_1 \; || \ldots || \; \TA_n )$, from which this direction of the claim immediately follows.\\
    
	``$\Leftarrow\!\!"$: Let $\rho'$ be a run of the network $\TA_1 \; || \ldots || \; \TA_n $ with $\rho' \not\models \phi$. We show that the set of events $\mathcal{E}_{\rho}$, i.e., the set of \emph{all} events appearing on the path $\rho$, fulfills the \textbf{SAT} and the \CFBF{} condition: From our initial assumption, it follows that $\rho \models \phi$ and from the definition of $\mathcal{E}_{\rho}$ we have $\rho \models \mathcal{E}_{\rho}$, hence the \textbf{SAT} condition is fulfilled. From the definition of the counterfactual trace automaton, it follows that the language $\TA^{\mathcal{E}_{\rho}|_i}_{\rho(\TA_i)}$ of every component $i$ describes all possible traces, i.e., arbitrary orderings of actions, with arbitrary delays, over the alphabet of actions $\Act$. From this we can deduce that the runs of the network under arbitrary interventions are in fact the runs of the original network, i.e., we have $\Pi\big((\TA_1 \inter \TA^{\mathcal{E}_{\rho}|_1}_{\rho(\TA_1)}) \; || \ldots || \; (\TA_n \inter \TA^{\mathcal{E}_{\rho}|_n}_{\rho(\TA_n)})\big) = \Pi(\TA_1 \; || \ldots || \; \TA_n )$. Since by our initial assumption there exists a $\rho' \not\models \phi$ in $\TA_1 \; || \ldots || \; \TA_n $, we can deduce that \CFBF{} is fulfilled.
	Finally, since $\mathcal{E}_{\rho}$ is finite, it either has a minimal subset that satisfies the two criteria and hence witnesses this direction of our claim, or $\mathcal{E}_{\rho}$ itself is the desired witness.
\end{proof}

\begin{proposition}
	Given an effect $\phi$ and a network of timed automata $\TA^n$, $\emptyset$ is the (unique) but-for cause for an effect $\phi$ on a run $\rho$ of $\TA^n$, if and only if there exists a run $\eta$ of $\TA^n$ with $\mathit{localize}(\rho,\TA^n) = \mathit{localize}(\eta,\TA^n)$ and $\eta \not\models \phi$, i.e., a run with the same local traces as the actual run, that does, however, not satisfy the effect.
\end{proposition}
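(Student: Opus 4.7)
The proof plan rests on a single structural observation about the empty cause: when $\cause = \emptyset$, every rule in Figure~\ref{fig:edge_rules} is the first rule, and every location invariant of $\TA^{\emptyset}_{\rho(\TA_i)}$ is tight, so $\TA^{\emptyset}_{\rho(\TA_i)}$ accepts exactly the single trace $\rho(\TA_i)$. Consequently, in the network $(\TA_1 \inter \TA^{\emptyset}_{\rho(\TA_1)}) \; || \ldots || \; (\TA_n \inter \TA^{\emptyset}_{\rho(\TA_n)})$, each component is constrained to follow precisely the local projection $\rho(\TA_i)$, while still allowing any resolution of the scheduler's nondeterminism when independent actions coincide in time. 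The set of runs of this construction is therefore exactly $\{\eta \in \Pi(\TA^n) \mid \mathit{localize}(\eta,\TA^n) = \mathit{localize}(\rho,\TA^n)\}$. This equality is the only nontrivial step; the rest is bookkeeping.

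For the ``$\Rightarrow$'' direction, assume $\emptyset$ is a but-for cause. Then \CFBF{} yields a run $\eta$ of the counterfactual network above with $\eta \not\models \phi$. By the observation, $\eta$ is a run of $\TA^n$ with $\mathit{localize}(\eta,\TA^n) = \mathit{localize}(\rho,\TA^n)$, which is the desired witness. For the ``$\Leftarrow$'' direction, let $\eta$ be as hypothesized. Then $\eta$ is a run of the counterfactual network and $\eta \not\models \phi$, so \CFBF{} holds for $\cause = \emptyset$. \textbf{SAT} is immediate, since $\rho \models \emptyset$ trivially and $\rho \models \phi$ is our standing assumption, and \textbf{MIN} is vacuous because $\emptyset$ has no proper subset. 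Hence $\emptyset$ is a but-for cause.

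Uniqueness follows from the \textbf{MIN} condition applied to any other candidate: if $\cause \neq \emptyset$ were also a but-for cause, then $\emptyset \subsetneq \cause$, and by the argument above $\emptyset$ itself satisfies \textbf{SAT} and \CFBF{}, contradicting the minimality required of $\cause$. Therefore $\emptyset$, whenever it is a but-for cause, is the unique one.

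The only step requiring care is the structural observation at the start; once the semantic equivalence between runs of the counterfactual network under $\cause = \emptyset$ and runs of $\TA^n$ sharing the local projections of $\rho$ is in hand, both directions of the biconditional and the uniqueness claim reduce to inspecting the definitions of \textbf{SAT}, \CFBF, and \textbf{MIN}. The main obstacle is verifying that the trace-level intersection in the definition of \CFBF{} genuinely preserves all scheduler resolutions (so that Proposition~13's quantification over network runs carries through), which is precisely what justifies that the witnessing $\eta$ need not be $\rho$ itself even though the local projections agree.
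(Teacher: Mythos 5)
Your proposal is correct and follows essentially the same route as the paper's proof: you establish that for $\cause = \emptyset$ the counterfactual network's runs are exactly those runs of $\TA^n$ sharing the local projections of $\rho$ (the paper proves the two inclusions of this equality separately, one in each direction of the biconditional), and the uniqueness argument via \textbf{MIN} is identical. The only cosmetic difference is that you state the run-set equality up front as a single lemma rather than unfolding it per direction.
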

\begin{proof}
	Let a network $\TA^n = \TA_1 \; || \ldots || \; \TA_n$ be given. First up, it is easy to see that whenever $\emptyset$ is a but-for cause, it is unique: No other set $\cause \neq \emptyset$ can satisfy \textbf{MIN}, since $\emptyset \subset \cause$ and $\emptyset$ satisfies \textbf{SAT} and \CFBF{} by assumption. We proceed with proving the equivalence:\\
 
    ``$\Rightarrow"$: Assume that $\emptyset$ is a but-for cause on some run $\rho$, then from \CFBF{} it follows that there exists a run $\rho' \in \Pi\big((\TA_1 \inter \TA^{\emptyset}_{\rho(\TA_1)}) \; || \ldots || \; (\TA_n \inter \TA^{\emptyset}_{\rho(\TA_n)})\big)$ such that $\rho' \not\models \phi$. From the definition of the counterfactual trace automaton $\TA^{\emptyset}_{\rho(\TA_i)}$ it follows that for all components $i$ and for all $\rho_i \in \Pi(\TA_i \inter \TA^{\emptyset}_{\rho(\TA_i)})$ we have that $\pi^{\rho_i} = \rho(\TA_i)$. From the definition of the localization function it then follows that for all $\zeta \in \Pi\big((\TA_1 \inter \TA^{\emptyset}_{\rho(\TA_1)}) \; || \ldots || \; (\TA_n \inter \TA^{\emptyset}_{\rho(\TA_n)})\big)$ we have that $\mathit{localize}(\rho,\TA) = \mathit{localize}(\zeta,\TA)$, so in particular for $\rho' $, which shows this direction of the claim.\\ 
    
    ``$\Leftarrow"$: Assume there is such an $\eta$ with $\mathit{localize}(\rho,\TA) = \mathit{localize}(\eta,\TA)$ and $\eta \not\models \phi$. It is easy to see that $\emptyset$ trivially satisfies \textbf{SAT} and \textbf{MIN}. Hence, we only need to show that $\Pi\big((\TA_1 \inter \TA^{\emptyset}_{\rho(\TA_1)}) \; || \ldots || \; (\TA_n \inter \TA^{\emptyset}_{\rho(\TA_n)})\big)$ includes $\eta$ (and indeed all runs with the same local traces as $\rho$). This follows from the fact that $\Pi(\TA_i \inter \TA^{\emptyset}_{\rho(\TA_i)})$ includes all runs $\eta_i$ that have the same trace as the local projection of $\rho$ with respect to this component, i.e., all $\eta_i = \rho(\TA_i)$, due to the definition of $\TA^{\emptyset}_{\rho(\TA_i)}$ and of trace intersection. By the definition of parallel composition, we can conclude that $\Pi\big((\TA_1 \inter \TA^{\emptyset}_{\rho(\TA_1)}) \; || \ldots || \; (\TA_n \inter \TA^{\emptyset}_{\rho(\TA_n)})\big)$ includes all $\rho'$ with $\mathit{localize}(\rho,\TA) = \mathit{localize}(\rho',\TA)$, hence it also includes $\eta$, which can then serve as a witness for $\emptyset$ satisfying \CFBF{}, which closes this direction of the equivalence.
\end{proof}

\section{Algorithms and Proofs of Section 4}\label{app:alg}
In this section, we give the algorithm for checking causality and detailed proofs of the statements from Section~\ref{sec:algorithms}. 
We start by proving the monotonicity properties.
\setcounter{lemma}{21}
\begin{lemma}[Cause Monotonicity]
	For every network of timed automaton $\TA_1 \; || \ldots || \; TA_n$, run $\rho$, and effect $\phi$, we have that
	\begin{enumerate}
		\item if a set of events $\cause$ fulfills \emph{\textbf{SAT}} also every subset $\cause' \subseteq \cause$ fulfills \emph{\textbf{SAT}}.
		\item   if a set of events $\cause$ fulfills \emph{\CFBF}~(fulfills \emph{\CFAct}) also every superset $\cause' \supseteq \cause$ fulfills \emph{\CFBF}~(fulfills \emph{\CFAct}).
	\end{enumerate}
\end{lemma}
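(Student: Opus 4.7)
The plan is to handle the two parts separately; part (1) is essentially unfolding definitions, while part (2) reduces to a monotonicity property of the counterfactual trace automaton construction with respect to the cause.

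For part (1), I would unfold $\rho \models \cause$ to $\cause \subseteq \events_\rho$. Given any $\cause' \subseteq \cause$, transitivity of set inclusion gives $\cause' \subseteq \events_\rho$, hence $\rho \models \cause'$. The other conjunct $\rho \models \phi$ is independent of the cause and transfers verbatim. So \textbf{SAT} holds for $\cause'$.

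For part (2), my plan is to establish the following component-wise language inclusion for any $\cause \subseteq \cause' \subseteq \events_\rho$ and every component index $i$:
\[
\mathcal{L}\big(\TA^{\cause|_i}_{\rho(\TA_i)}\big) \subseteq \mathcal{L}\big(\TA^{\cause'|_i}_{\rho(\TA_i)}\big).
\]
This follows by direct inspection of the four rules defining $\trel$ in Figure~\ref{fig:edge_rules} together with the invariant assignment of Definition~\ref{definiton:CFTA_Delay}: if an event $(\alpha^\pi_i, i)$ or $(\delta^\pi_i, i)$ lies in $\cause$, then moving to a superset $\cause'$ cannot remove the event, and the applicable rule either stays the same or transitions to one with a strictly more permissive guard, action range, or invariant (e.g.\ from a guard $d = \delta^\pi_i$ to $\top$, or from the single action $\alpha^\pi_i$ to any $\beta \in \mathit{Act}$). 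Hence every edge present at cause $\cause$ is also present at cause $\cause'$, and every location invariant is weakened monotonically; consequently the set of accepted traces grows. Using that (trace) intersection $\inter$ and parallel composition $||$ preserve language inclusion in each argument, I lift this to the full network:
\[
\mathcal{L}\!\left((\TA_1 \inter \TA^{\cause|_1}_{\rho(\TA_1)}) \; || \ldots || \; (\TA_n \inter \TA^{\cause|_n}_{\rho(\TA_n)})\right) \subseteq \mathcal{L}\!\left((\TA_1 \inter \TA^{\cause'|_1}_{\rho(\TA_1)}) \; || \ldots || \; (\TA_n \inter \TA^{\cause'|_n}_{\rho(\TA_n)})\right).
\]
A witnessing counterfactual run falsifying $\phi$ in the smaller network therefore still exists in the larger one, giving \CFBF{} for $\cause'$.

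The extension to \CFAct{} is immediate: the location contingency automata $\TA_i^{\mathsf{loc}(\rho)}$ and the clock contingency operator $(\cdot)^{\mathsf{clk}(\rho)}$ depend only on $\rho$, not on $\cause$, so only the factors $\TA^{\cause|_i}_{\rho(\TA_i)}$ change when moving from $\cause$ to $\cause'$. Applying the same component-wise language inclusion inside the intersection with $\TA_i^{\mathsf{loc}(\rho)}$, and then monotonicity of parallel composition and of the clock contingency construction, yields the corresponding inclusion for the network appearing in \CFAct, and hence the preservation of the negated effect. The main obstacle I anticipate is being precise about the rule-by-rule case analysis for the transitions of $\TA^{\cause|_i}_{\rho(\TA_i)}$ versus $\TA^{\cause'|_i}_{\rho(\TA_i)}$ when an event lies in $\cause' \setminus \cause$, since there the applicable rule in Figure~\ref{fig:edge_rules} changes; the remaining steps are standard monotonicity properties of the product and intersection constructions.
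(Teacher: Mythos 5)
Your proposal is correct and follows essentially the same route as the paper: part (1) by transitivity of set inclusion, and part (2) by observing that enlarging the cause only makes each counterfactual trace automaton more permissive (so its runs/traces form a superset), then lifting this through intersection and parallel composition, with the contingency constructions unaffected since they depend only on $\rho$. Your treatment is in fact slightly more careful than the paper's, since you explicitly account for the weakened invariants and the guard changes (e.g.\ $d = \delta^\pi_i$ to $\top$) rather than claiming a literal superset of edges.
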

\begin{proof} We show the two statements separately:

	\begin{enumerate}
		\item Follows by the transitivity of set inclusions: If $\cause$ fulfills \textbf{SAT}, we have that $\rho \models \cause \text{ and } \rho \models E$. Hence, $\cause' \subseteq \cause \subseteq \events_\rho$ and therefore $\rho \models \cause'$ such that also $\cause'$ fulfills \textbf{SAT}. 
		\item Let $\cause$ fulfill \CFBF, that is, there is a counterfactual run $\rho'$ of $(\TA_1 \inter \TA^{\cause|_1}_{\rho(\TA_1)}) \; || \ldots || \; (\TA_n \inter \TA^{\cause|_n}_{\rho(\TA_n)})$ with $\rho \not\models \phi$. Now notice that for $\cause' \supseteq \cause$, also the transition relation of each counterfactual trace automaton of $\cause'$ is a superset of the one of $\cause$ such that we also have $\Pi(\TA^{\cause'|_i}_{\rho(\TA_i)}) \supseteq \Pi(\TA^{\cause|_i}_{\rho(\TA_i)})$.
        Therefore, $\rho'$ is also a run of $(\TA_1 \inter \TA^{\cause'|_1}_{\rho(\TA_1)}) \; || \ldots || \; (\TA_n \inter \TA^{\cause'|_n}_{\rho(\TA_n)})$ such that $\cause'$ fulfills \CFBF.  The proof for \CFAct~works analogously for the  run in intersection of the contingency and counterfactual trace automata.  
	\end{enumerate}
\end{proof}

\begin{algorithm}[t]
	\caption{\textbf{Checking But-For Cause}}\label{alg:checking}
	\KwIn{network $\TA = \TA_1 \; || \ldots || \; \TA_n$, run $\rho$, effect $\phi$, set of events $\cause$}
	\KwOut{``Is $\cause$ a but-for cause for $\phi$ in $\rho$ of $\TA$?''}
	\If(\tcp*[f]{checking \textbf{SAT}}){$\rho \not\models \phi$ \textbf{or} $\cause \not\subseteq \events_\rho$}{\label{com:sat-check}
		\Return \False
	}
 	\If(\tcp*[f]{checking \CFBF}){$(\TA_1 \inter \TA^{\cause|_1}_{\rho(\TA_1)}) \; || \ldots || \; (\TA_n \inter \TA^{\cause|_n}_{\rho(\TA_n)}) \models \phi$}{\label{com:cg-check}
		\Return \False
	}
 	\For(\tcp*[f]{checking \textbf{MIN}}){event $e \in \cause$}{
        $C' := \cause \setminus \{e\} $\\
		\If{$(\TA_1 \inter \TA^{\cause'|_1}_{\rho(\TA_1)}) \; || \ldots || \; (\TA_n \inter \TA^{\cause'|_n}_{\rho(\TA_n)}) \not\models \phi$}{\label{min:falseline1}
			\Return \False\label{min:falseline2}
		}
	}
	\Return \True
\end{algorithm}

Algorithm~\ref{alg:checking} decides whether a given set of events forms a but-for cause. It is a straightforward implementation of Definition~\ref{def:bf-cause-RT-delay} of but-for causality under the use of monotonicity for accelerating the verification of the \textbf{MIN} condition. Hence, we do not give a detailed proof of correctness for Algorithm~\ref{alg:checking} and continue directly with cause computation.

\begin{theorem}
    Algorithm~\ref{alg:computing} is sound and complete, i.e., it terminates with  $$\textsf{Compute But-For Causes}(\TA, \rho, \phi) = \{\cause \, | \, \cause \text{ is a but-for cause for }\phi \text{ in } \rho \text{ of } \TA \},$$
    for all networks $\TA = \TA_1 \; || \ldots || \; \TA_n$ and runs $\rho$ of $\TA$ satisfying an effect $\phi$. 
\end{theorem}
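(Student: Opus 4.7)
The plan is to establish termination, soundness, and completeness separately, using cause monotonicity (Lemma~\ref{lemma:monotonicity}) as the central lever. Termination is immediate: $\events_\rho$ is finite, so $\mathcal{P}(\events_\rho)$ and the outer loop are bounded, and each model-checking query on Lines~\ref{algcom:mc} and~\ref{algcom:mc1} terminates by decidability of MITL model checking on networks of timed automata. Hence it remains to prove that the output coincides with the set of all but-for causes, which amounts to verifying the three conditions \textbf{SAT}, \CFBF, and \textbf{MIN} for each returned $\cause$ and showing that no but-for cause is missed.

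For \textbf{soundness}, \textbf{SAT} holds trivially since $\rho \models \phi$ by precondition and every candidate $\cause$ satisfies $\cause \subseteq \events_\rho$, while \CFBF{} is exactly the condition that the guards on Lines~\ref{algcom:mc} and~\ref{algcom:mc1} verify before $\cause$ is placed into $Res_s$ or $Res_l$. The crux is \textbf{MIN}. For $\cause \in Res_s$, I would argue that when $\cause$ is inspected at iteration $i = |\cause|$ of the from-below enumeration it is still in $Power$; had any strict subset $\cause' \subsetneq \cause$ satisfied \CFBF{} at an earlier iteration, then $\cause' \in Res_s$ and $\cause$ would already have been removed from $Power$ by Line~\ref{algcom:remove1}, a contradiction. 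For $\cause \in Res_l$ surviving the final filter on Line~\ref{algcom:result}, one must show that the filter condition (no strict subset in $Res_s \cup Res_l$) already suffices for \textbf{MIN}: any hypothetical $\cause' \subsetneq \cause$ satisfying \CFBF{} would, by Lemma~\ref{lemma:monotonicity}(2), never be pruned by Line~\ref{algcom:remove2} (pruning requires being a subset of a set that fails \CFBF), and pruning $\cause'$ via Line~\ref{algcom:remove1} would imply a still smaller subset of $\cause'$ (hence of $\cause$) sitting in $Res_s$, contradicting the filter.

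For \textbf{completeness}, let $\cause$ be a but-for cause. Using Lemma~\ref{lemma:monotonicity} once more, $\cause$ is never pruned from $Power$: a removal by Line~\ref{algcom:remove1} would require some strict subset satisfying \CFBF, violating \textbf{MIN} of $\cause$; and a removal by Line~\ref{algcom:remove2} would require $\cause$ to be a subset of a set failing \CFBF, violating monotonicity of \CFBF{} since $\cause$ itself satisfies \CFBF. Hence $\cause$ is inspected, its model-checking query confirms \CFBF, and $\cause$ enters $Res_s$ or $Res_l$. In the latter case the final filter keeps $\cause$: any strict subset present in $Res_s \cup Res_l$ would satisfy \textbf{SAT} and \CFBF, directly contradicting \textbf{MIN} of $\cause$. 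The main obstacle I anticipate is the careful bookkeeping needed to show that the filter on Line~\ref{algcom:result} is exactly tight---loose enough to retain every genuine but-for cause that first lands in $Res_l$, yet strict enough to discard every non-minimal element that slipped past the on-the-fly pruning. Both directions of the cause monotonicity lemma are essential for threading this needle.
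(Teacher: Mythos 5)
Your proposal is correct and follows essentially the same route as the paper's proof: both hinge on Lemma~\ref{lemma:monotonicity} to show that a genuine but-for cause is never pruned from $Power$ (completeness) and that any candidate surviving the pruning and the final filter must be minimal (soundness); the paper merely orders the two directions differently, proving completeness first and then invoking it inside the minimality argument by considering the \emph{minimal} offending subset $\cause' \subsetneq \cause$. That last device is worth adopting in your $Res_s$ case, where you assert that an earlier subset satisfying \CFBF{} would itself land in $Res_s$ --- strictly it might have been pruned first, but then an even smaller subset sits in $Res_s$, which is exactly the argument you already give for $Res_l$.
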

\begin{proof}
    We argue for soundness ($\subseteq$) and completeness ($\supseteq$) separately:\\
    
    ``$\supseteq$'': Let $\cause$ be a but-for cause for $\phi$ in $\rho$ of $\TA$, i.e. fulfilling \textbf{SAT}, \CFBF, \textbf{MIN}. We first notice that the algorithm does then not remove $\cause$ from $Power$ (until it may be added to $Res_s)$: $\cause$ is not removed by Line~\ref{algcom:remove1} since the minimality of $\cause$ implies that it has no subset fulfilling \CFBF; and $\cause$ is not removed by Line~\ref{algcom:remove2} since the monotonicity of \CFBF{} implies that it has no superset not fulfilling \CFBF. Now since $\cause$ fulfills \CFBF, if $|\cause| \leq \frac{\events_\rho}{2}$, $\cause$ is added to $Res_s$ in Line~\ref{algcom:adding1}, if $|\cause| > \frac{\events_\rho}{2}$ it is added to $Res_l$ in Line~\ref{algcom:adding2} and is, in addition, not removed in the last line as $\cause$ fulfills the \textbf{MIN} condition. Therefore, $\cause$ is returned by $\textsf{Compute But-For Causes}(\TA, \rho, \phi)$.\\
    
    ``$\subseteq$'': Let $\cause \in \textsf{Compute But-For Causes}(\TA, \rho, \phi)$. As for all set of events considered by the algorithm, we have $\cause \in \mathcal{P}(\events_\rho)$ and, hence, $\cause \subseteq \events_\rho$ such that $\cause$ fulfills \textbf{SAT}. By definition of the algorithm, $\cause$ is only returned as a result when it was added to $Res_s$ or $Res_l$. This, in turn, is only the case, if $(\TA_1 \inter \TA^{\cause|_1}_{\rho(\TA_1)}) \; || \ldots || \; (\TA_n \inter \TA^{\cause|_n}_{\rho(\TA_n)}) \not\models \phi$. Therefore, $\cause$ fulfills \CFBF. 
    Lastly to establish the \textbf{MIN} condition, we have to show that there are no proper subsets of $\cause$ that fulfill \textbf{SAT} and \CFBF. Towards a contradiction, lets assume there are such  subsets and let $\cause' \subsetneq \cause$ be the minimal one. Then, $\cause'$ is but-for cause and by the first inclusion $\cause' \in \textsf{Compute But-For Causes}(\TA, \rho, \phi)$. Now, if $\cause$ was returned by the algorithm since $\cause \in Res_s$, then $|\cause'| < |\cause|$ implies that the algorithm has considered $\cause'$ earlier. From this point, however, $\cause \not\in Power$, a contradiction. If $\cause$ was returned since $\cause \in Res_l$, the filtering in Line~\ref{algcom:result} results in a contradiction. Therefore, $\cause$ is a but-for cause for $\phi$ in $\rho$ of $\TA$.
\end{proof}

\begin{theorem}
    Checking and computing causes for an effect $\phi$ on the run $\rho$ in a network of timed automata $\TA$ is $\mathsf{EXPSPACE}(\phi)$-complete.
\end{theorem}
\begin{proof}
    Analyzing the computational compexity of Algorithms~\ref{alg:computing} and \ref{alg:checking} shows the two problems of cause checking and computations to be solvalbe in $\mathsf{EXPSPACE}(\phi)$. For showing $\mathsf{EXPSPACE}(\phi)$-hardness, we present a reduction from the model checking problem, that is $\mathsf{EXPSPACE}$-complete \cite{alur1996}. We construct for a timed automaton $\TA = (\Loc, q_0, X, \trel, I, L)$ an extended reduction automaton  $\TA_{\textsf{red}} := (\Loc \, \dot\cup \, \{ s_{\textsf{new}}, q_{\textsf{new}} \} \,, \, s_{\textsf{new}}\,, \, X \, \dot\cup \, \{x_{\textsf{new}}\} \,, \, \trel' \,, \, I' \, , \, L')$
over an extended set of actions $ Act \, \dot\cup \, \{\alpha_{\textsf{new}}, \beta_{\textsf{new}} \}$ and labels $AP \, \dot\cup \, \{p_{\textsf{new}}\}$ whereby $s_{\textsf{new}}$ and $q_{\textsf{new}}$ are fresh locations, $\alpha_{\textsf{new}}$ and $\beta_{\textsf{new}}$ are fresh actions, $x_{\textsf{new}}$ is a fresh clock, $p_{\textsf{new}}$ is a fresh atomic proposition, and we have 
\vspace{1em}
	\item $\trel' \, := \, \trel \cup \, \{(s_{\textsf{new}}, \top, \alpha_{\textsf{new}}, \{x_{\textsf{new}} := 1\}, q_{\textsf{new}}) \, , \, (q_{\textsf{new}}, \top, \alpha_{\textsf{new}}, \emptyset, q_{\textsf{new}}) \, , \, (s_{\textsf{new}}, \top, \beta_{\textsf{new}}, \epsilon, q_0) \}$,  
 
\vspace{1em}
$I' (q) := \left\{\begin{array}{ll} I(q), & q \in \Loc, \\
	x_{\textsf{new}} \leq 0, & q = s_{\textsf{new}},\\
	x_{\textsf{new}} \leq 1, & q = q_{\textsf{new}},\end{array}\right.$ \hspace{2em} and \hspace{2em} $L' (q) := \left\{\begin{array}{ll} L(q), & q \in \Loc, \\
	\{\,\}, & q = s_{\textsf{new}},\\
	\{p_{\textsf{new}}\}, & q = q_{\textsf{new}}.\end{array}\right.$ 
\vspace{1em}

That is, $\TA_{\textsf{red}}$ is an extension of $\TA$ that has a new initial location $s_{\textsf{new}}$ from which a direct transition (delay of 0) to either a second new location $q_{\textsf{new}}$ or to the initial state of the original automaton $\TA$ is enforced. This new automaton has a new run, namely $\rho_{\textsf{red}} :=  (s_{\textsf{new}}, u_0) \xrightarrow{0.0}\xrightarrow{\alpha_{\textsf{new}}} (q_{\textsf{new}}, u_0) \xrightarrow{1.0}\xrightarrow{\alpha_{\textsf{new}}})^\omega$ fulfilling the effect $\phi_{\textsf{red}} := \phi \lor \TLEventually \, p_{\textsf{new}}$.

Instances $(\TA, \phi)$ of the model checking problem are now mapped to instances of the cause checking problem via the reduction $r: (\TA, \phi) \mapsto (\TA_{\textsf{red}}, ~ \rho_{\textsf{red}}, ~ \phi_{\textsf{red}}, ~ \cause_{\textsf{red}})$
	with $\cause_{\textsf{red}} := \{((\alpha_{\textsf{new}}, 1, \rho(\TA_{\textsf{red}})\}$.
    Now, we have that $\TA \not\models \phi$	iff $\cause_{\textsf{red}}$ is a cause for $\phi_{\textsf{red}}$ in $\rho_{\textsf{red}}$ of $\TA_{\textsf{red}}$. 
\end{proof}

\section{Contingency Automaton}\label{app:contingency}
In this section, we illustrate the contingency automaton construction from Example~\ref{ex:actual-cause}. For automaton $\TA_1$, run $\rho$ from Subsection~\ref{subsec:example} and its sequence of local locations $\mathsf{loc}(\rho, \TA_1) = \mathsf{init}, \mathsf{crit}, \mathsf{init} $, the location contingency automaton $\TA_1^{\mathsf{loc}(\rho)}$ is depicted in Figure~\ref{fig:contingencyEx}. Following Definiton~\ref{definiton:location_contingencies}, the contingency automaton is constructed in the following way:

\begin{itemize}
    \item we copy the automaton $|\rho(\TA_1)|$ times, to encode the current step in the states (second component of the tuple);
    \item we redirect the transitions from the original automata (black transitions) to their target location in the next copy;
    \item in each step, we add contingency transitions (red transitions), allowing the location to be reset to what it had been in the corresponding step of the original run $\rho$. 
\end{itemize}

We can now find a counterfactual run in $\TA_1^{\mathsf{loc}(\rho)}$ avoiding the critical section by taking the contingency from $(\textsf{init}, 1) \xrightarrow{\beta} (\textsf{init}, 2)$. That is, the location after the second transition is reset to what it had been in the original run, namely to location $\textsf{init}$. The construction of the clock contingency automaton works in a similar way: We allow additional transitions to reset the clocks as they had been in the original run at the respective positions. 

\begin{figure}[h]
    \scalebox{0.85}{
    \centering
		\begin{tikzpicture}[draw, semithick, node distance = 8em, every text node part/.style={align=center}, state/.style={draw,circle,minimum size = 4.5em}]
			\node[state, initial, initial text = {}] (l0) at (0,0) {$(\mathsf{init}, 0)$}; 
			\node[state](l1) at (0,-5) {$(\mathsf{crit}, 0)$\\$x \leq 3$}; 
               \node[state] (l01) at (6,0) {$(\mathsf{init}, 1)$}; 
			\node[state](l11) at (6,-5) {$(\mathsf{crit}, 1)$\\$x \leq 3$}; 
               \node[state] (l02) at (12,0) {$(\mathsf{init}, 2)$}; 
			\node[state](l12) at (12,-5) {$(\mathsf{crit}, 2)$\\$x \leq 3$}; 
			\path[draw,->]   (l0) edge[bend right = 25] node[below] {$\beta$\\$x := 0$\\} (l11)
            (l1) edge[bend left = 25] node[above] {$\beta$\\$x = 3$\\[-4pt]\phantom{test}} (l01)
            (l0) edge  node[midway, above] {$\alpha$\\$x := 0$} (l01)
            (l1) edge  node[midway, below] {$\alpha$} (l11)
            (l11) edge[red] node[below right] {$\alpha$} (l02)
            (l01) edge[bend left = 35, red] node[above] {$\beta$\\$x := 0$} (l02)
            (l01) edge[bend right = 25] node[below] {$\beta$\\$x := 0$\\} (l12)
            (l11) edge[bend left = 25] node[above] {$\beta$\\$x = 3$\\[-4pt]\phantom{test}} (l02)
            (l01) edge  node[midway, above] {$\alpha$\\$x := 0$} (l02)
            (l11) edge  node[midway, below] {$\alpha$} (l12)
            (l1) edge[bend right = 35, red] node[below] {$\beta$\\$x = 3$} (l11)
            (l0) edge[red] node[above right] {$\alpha$\\$x := 0$} (l11) 
            (l02) edge[bend right = 15] node[left] {$\beta$\\$x := 0$} (l12)
            (l12) edge[bend right = 15] node[right] {$\beta$\\$x = 3$} (l02)
            (l02) edge [loop right]  node[midway, right] {$\alpha$\\$x := 0$} (l02)
            (l12) edge [loop right]  node[midway, right] {$\alpha$} (l12)
            (l02) edge [loop above, red]  node[midway, above] {$\beta$\\$x := 0$} (l02)
            (l12) edge[bend right = 65, red] node[right] {$\alpha$} (l02); 
		\end{tikzpicture}
    }
    \caption{The contingency automaton $\TA_1^{\mathsf{loc}(\rho)}$ from Example~\ref{ex:actual-cause}.}
    \label{fig:contingencyEx}
\end{figure}

\section{Experimental Setup and Results for Fischer's Protocol}\label{app:fisher}
\begin{figure}
    \centering
	\vspace{-30pt}
		\begin{tikzpicture}[node distance = 1.2cm]
		\node(l0) at (0,0) [initial, initial text = {}, circle, minimum size = 1cm, draw] {\footnotesize \textsf{init}};
		\node(l1) at (4.5,0) [circle, minimum size = 1cm, draw] {\footnotesize \textsf{req}};
		\node(l2) at (4.5,-4.5) [circle, minimum size = 1cm, draw] {\footnotesize \textsf{wait}};
		\node(l3) at (0,-4.5) [circle, minimum size = 1cm, draw] {\footnotesize \textsf{crit}};

		\draw[->] (l0) to node[midway, above] {} (l1);
		\draw[->] (l1) to node[midway, left] {\begin{tabular}{c} $x_i \leq 2$ \\ $id := i$ \\ $x_i := 0$ \end{tabular}} (l2);
		\draw[->] (l2) to [out = 45, in = -45]  node[midway, right] {\begin{tabular}{c} $id = 0$ \\ $x_i := 0$ \end{tabular}} (l1);
		\draw[->] (l2) to node[midway, above] {$x_i \geq 2 \land id = i$} (l3);
		\draw[->] (l3) to node[midway, left] { $id := 0$} (l0);

		\path (l0) -- node[sloped] (text) {\begin{tabular}{c} $id = 0$ \\ $x_i := 0$ \end{tabular}} (l1);	
		
		\node(di1) at (4.5,0.7) [circle, minimum size = 0.7cm] {$x_i \leq 2$};				
		\end{tikzpicture}
    \caption{A single component automaton $\TA_i$ of Fischer's protocol network $\TA_1 \, || \, \dots \, || \, \TA_n$.}	
        \label{fig:fischer}
\end{figure}

We report on the details in the experimental evaluation for Fischer's protocol and the causes identified by the tool. Fischer's protocol is a popular real-time mutual exclusion protocol, we depict one component $\TA_i$ in Figure \ref{fig:fischer}. We then test the effect $\phi := \lnot \TLAlways_{[0, \infty)} \lnot \mathsf{crit_1}$ on the network $\TA_1 \, || \, \dots \, || \, \TA_n$, i.e.\ that the first component reaches its critical section. 

We state the tested runs and results exemplary for $n = 2$:  
\begin{align*}
    \rho_1 := &\bigl(\{\mathsf{init_{1,2}}\} \xrightarrow[1.0]{\tau_1} \{\mathsf{req_1}, \mathsf{init_2}\} \xrightarrow[1.0]{\tau_1}
\{\mathsf{req_1}, \mathsf{init_2}\} \xrightarrow[4.0]{\tau_1}
\{\mathsf{wait_1}, \mathsf{init_2}\}
\xrightarrow[1.0]{\tau_1}
\{\mathsf{crit_1}, \mathsf{init_2}\}\xrightarrow[2.0]{\tau_1}\bigl)^\omega \enspace \\[4pt]
\rho_2 := &\{\mathsf{init_{1,2}}\} \xrightarrow[1.0]{\tau_2}  \Bigl(\{\mathsf{init_1}, \mathsf{req_2}\}\xrightarrow[1.0]{\tau_1} \{\mathsf{req_1}, \mathsf{req_2}\} \xrightarrow[1.0]{\tau_2}
\{\mathsf{req_1}, \mathsf{wait_2}\} \\
&\hspace{3cm} \xrightarrow[1.0]{\tau_1}
\{\mathsf{wait_1}, \mathsf{wait_2}\} 
\xrightarrow[3.0]{\tau_1}
\{\mathsf{crit_1}, \mathsf{wait_2}\}\xrightarrow[1.0]{\tau_1}\{\mathsf{init_1}, \mathsf{wait_2}\}\xrightarrow[1.0]{\tau_2}\Bigl)^\omega \enspace 
\end{align*}
The detected causes for those two runs are reported in Table \ref{tab:fisher_table}. As in Fischer's protocol only internal actions are used, the detected root causes only consist out of delay actions. Further, since we do not encounter cases of preemption, but-for and actual causes agree on this example. For $n = 3, 4$ we tested runs with the same lasso-part.  

\begin{table}[h]
    \caption{Overview of the root causes found in the experiments for Fischer's protocol.}
    \label{tab:fisher_table}
\centering
		\def\arraystretch{1.1}
		\setlength\tabcolsep{3mm}
		\begin{booktabs}{X[c]cccc}
			\toprule
			\textbf{Ref.} & $\mathbf{\rho_1}$\textbf{: BF Causes} & $\mathbf{\rho_1}$ \textbf{: Actual Causes} & $\mathbf{\rho_2}$\textbf{: BF Causes} & $\mathbf{\rho_2}$ \textbf{: Actual Causes}\\
			\midrule
                \textbf{1} & $\{(1.0,1,\TA_1) \}$ & $\{(1.0,1,\TA_1) \}$ & $\{(1.0,1,\TA_2)\}$ & $\{(1.0,1,\TA_2)\}$ \\
                \midrule[dotted]
                \textbf{2} & $\{(4.0,3,\TA_1) \}$ & $\{(4.0,3,\TA_1) \}$ & $\{(2.0,1,\TA_1) \}$ & $\{(2.0,1,\TA_1) \}$\\
                \midrule[dotted]
                \textbf{3} &  &  & $\{(2.0,2,\TA_2) \}$ & $\{(2.0,2,\TA_2) \}$\\
                \midrule[dotted]
                \textbf{4} &  &  & $\{(2.0,2,\TA_1)\}$ & $\{(2.0,2,\TA_1)\}$ \\
                \midrule[dotted]
                \textbf{5} &  &  & \makecell{$\{(3.0,3,\TA_1),$\\$ (6.0,3,\TA_2)\}$} &  \makecell{$\{(3.0,3,\TA_1),$\\$ (6.0,3,\TA_2)\}$}\\
			\bottomrule
		\end{booktabs}
\end{table}

\end{document}